\pgfplotsset{compat=1.15}
\newtheorem{theorem}{Theorem}
\newtheorem{proposition}[theorem]{Proposition}
\newtheorem{hypothesis}[theorem]{Hypothesis}
\DeclareDocumentCommand\R{}{\mathbb{R}}
\DeclareDocumentCommand\Z{}{\mathbb{Z}}
\DeclareDocumentCommand\conv{o}{\operatorname{conv}\IfValueTF{#1}{\left(#1\right)}{}}
\DeclareDocumentCommand\aff{o}{\operatorname{aff}\IfValueTF{#1}{\left(#1\right)}{}}
\DeclareDocumentCommand\lin{o}{\operatorname{span}\IfValueTF{#1}{\left(#1\right)}{}}
\DeclareDocumentCommand\rows{o}{\operatorname{rows}\IfValueTF{#1}{\left(#1\right)}{}}
\DeclareDocumentCommand\transpose{m}{#1^{\intercal}}
\DeclareDocumentCommand\zerovec{o}{\IfNoValueTF{#1}{\mathbb{O}}{\mathbb{O}_{#1}}}
\DeclareDocumentCommand\SCIP{}{\texttt{SCIP}}
\DeclareDocumentCommand\IPO{}{\texttt{IPO}}
\title{Face Dimensions of General-Purpose Cutting Planes for Mixed-Integer Linear Programs}
\author{Matthias Walter\footnote{Department of Applied Mathematics, University of Twente, The Netherlands, \texttt{m.walter@utwente.nl}}}
\date{\small\today}
\begin{document}

\maketitle

\begin{abstract}
  Cutting planes are a key ingredient to successfully solve mixed-integer linear programs.
  For specific problems, their strength is often theoretically assessed by showing that they are facet-defining for the corresponding mixed-integer hull.
  In this paper we experimentally investigate the dimensions of faces induced by general-purpose cutting planes generated by a state-of-the-art solver.
  Therefore, we relate the dimension of each cutting plane to its impact in a branch-and-bound algorithm.
\end{abstract}

\section{Introduction}
\label{sec_introduction}

We consider the mixed-integer program
\begin{subequations}
  \label{model_mip}
  \begin{alignat}{7}
    & \text{max }   & \transpose{c}x \label{model_mip_obj} \\
    & \text{s.t. }
      & Ax &\leq b \label{model_mip_cons} \\
    & & x_i &\in \Z &\quad& \forall i \in I \label{model_mip_int}
  \end{alignat}
\end{subequations}
for a matrix $A \in \R^{m \times n}$, vectors $c \in \R^n$ and $b \in \R^m$ and a subset $I \subseteq \{1,2,\dotsc,n\}$ of integer variables.
Let $P \coloneqq \conv \{ x \in \R^n : x \text{ satisfies~\eqref{model_mip_cons} and~\eqref{model_mip_int}} \}$ denote the corresponding \emph{mixed-integer hull}.
A \emph{cutting plane} for~\eqref{model_mip} is an inequality $\transpose{a}x \leq \beta$ that is valid for $P$ (and possibly invalid for some point computed during branch-and-cut).
Such a valid inequality \emph{induces the face} $F \coloneqq \{ x \in P : \transpose{a}x = \beta \}$ of $P$.
For more background on polyhedra we refer to~\cite{Schrijver86}.

To tackle specific problems, one often tries to find \emph{facet-defining} inequalities, which are inequalities whose induced face $F$ satisfies $\dim(F) = \dim(P) - 1$.
This is justified by the fact that any system $Cx \leq d$ with $P = \{x : Cx \leq d\}$ has to contain an inequality that induces $F$.
Since the dimension of a face can vary between $-1$ (no $x \in P$ satisfies $\transpose{a}x = \beta$) and $\dim(P)$ (all $x \in P$ satisfy $\transpose{a}x = \beta$), the following hypothesis is a reasonable generalization of facetness as a strength indicator.
\begin{hypothesis}
  \label{hypo_strength_dimension}
  The practical strength of an inequality correlates with the dimension of its induced face.
\end{hypothesis}
There is no unified notion of the practical strength of an inequality, but we will later define one that is related to its impact in a branch-and-bound algorithm.
The main goal of this paper is to computationally test this hypothesis for general-purpose cutting planes used in MIP solvers.

\paragraph{Outline.}
In Section~\ref{sec_dimension} we present the algorithm we used to compute dimensions.
Section~\ref{sec_strength} is dedicated to the score we used to assess a cutting plane's impact, and in Section~\ref{sec_computations} we present our findings, in particular regarding Hypothesis~\ref{hypo_strength_dimension}.

\section{Computing the dimension of a face}
\label{sec_dimension}

This section is concerned about how to effectively compute the dimension of $P$ or of one of its faces $F$ induced by some valid inequality $\transpose{a}x \leq \beta$.
In our experiments we will consider instances with several hundreds variables, and hence the enumeration of all vertices of $P$ or $F$ is typically impossible.
Instead, we use an oracle-based approach.
An \emph{optimization oracle} for $P$ is a black-box subroutine that can solve any linear program over the polyhedron $P$, i.e., for any given $w \in \R^n$ it can solve
\begin{equation}
  \text{max } \transpose{w}x \text{ s.t. } x \in P. \label{model_oracle}
\end{equation}
In case~\eqref{model_oracle} is feasible and bounded, the oracle shall return an optimal solution, and in case it is feasible and unbounded, it shall return an unbounded direction.
Note that for such an oracle we neither require all vertices of $P$ nor all valid (irredundant) inequalities.
Indeed, we can use any MIP solver and apply it to problem~\eqref{model_mip} with $c \coloneqq w$.

We now discuss how one can compute $\dim(P)$ by only accessing an optimization oracle.
To keep the presentation simple we assume that $P$ is bounded, although our implementation can handle unbounded polyhedra as well.
The algorithm maintains a set $X \subseteq P$ of affinely independent points and a system $Dx = e$ of valid equations, where $D$ has full row-rank $r$.
Hence, $\dim(X) \leq \dim(P) \leq n - r$ holds throughout and the algorithm works by either increasing $\dim(X)$ or $r$ in every iteration.
The details are provided in Algorithm~\ref{algo_affine_hull}.

\bigskip

\begin{algorithm}[H]
\TitleOfAlgo{Affine hull of a polytope via optimization oracle}
\label{algo_affine_hull}
\LinesNumbered
\setcounter{AlgoLine}{0}
\AlgoInput{Optimization oracle for a polytope $\emptyset \neq P \subseteq \R^n$.}
\AlgoOutput{%
  Affine basis $X$ of $\aff(P)$,
  non-redundant system $Dx = e$
  with $\aff(P) = \{ x \in \R^n : Dx = e\}$.
}
\BlankLine
Initialize $X \coloneqq \emptyset$ and $Dx = e$ empty.\;
\While{$|X| + 1 < n -|\rows(D)|$}
{%
  Compute a direction vector $d \coloneqq \aff(X)^\perp \setminus \lin(\rows(D))$. \label{algo_affine_hull_direction} \;
  Query the oracle to maximize $\transpose{d}x$ over $x \in P$ and let $x^+ \coloneqq \arg\max\{ \transpose{d}x : x \in P \}$. \;
  Query the oracle to maximize $-\transpose{d}x$ over $x \in P$ and let $x^- \coloneqq \arg\min\{ \transpose{d}x : x \in P \}$. \;
  \If{$\transpose{d}x^+ = \transpose{d}x^-$}
  {
    Add equation $\transpose{d}x = \transpose{d}x^+$ to system $Dx = e$. \label{algo_affine_hull_equation} \;
    \lIf{$X = \emptyset$}{set $X \coloneqq \{x^+\}$.} \label{algo_affine_hull_equation_point}
  }
  \lElseIf{$X = \emptyset$}{set $X \coloneqq \{ x^+, x^- \}$.} \label{algo_affine_hull_points}
  \Else
  {
    Let $\gamma \coloneqq \transpose{d}x$ for some $x \in X$. \;
    \lIf{$\transpose{d}x^+ \neq \gamma$}{set $X \coloneqq X \cup \{x^+\}$.} \label{algo_affine_hull_maximizer}
    \lElse{set $X \coloneqq X \cup \{x^-\}$.} \label{algo_affine_hull_minimizer}
  }
}
\Return{$(X, Dx = e)$.}
\end{algorithm}

\begin{proposition}
  For an optimization oracle for a non-empty polytope $P \subseteq \R^n$,
  Algorithm~\ref{algo_affine_hull} requires $2n$ oracle queries to compute a set $X \subseteq P$ with $|X| = \dim(P) + 1$ and a system $Dx = e$ of $n - \dim(P)$ equations satisfying
  \begin{equation*}
    \aff(P) = \aff(X) = \{ x \in \R^n : Dx = e \}.
  \end{equation*}
\end{proposition}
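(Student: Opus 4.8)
The plan is to run the loop under three invariants and then control everything with a single integer potential. Throughout I would maintain: (i) $X \subseteq P$ is affinely independent; (ii) every point of $P$ satisfies the system $Dx = e$, and the rows of $D$ are linearly independent; and (iii) every difference $x - x'$ with $x, x' \in X$ is orthogonal to every row of $D$. Invariant (iii) is a consequence of (i) and (ii): both $x$ and $x'$ lie in $P$ and hence satisfy $Dx = e$, so $D(x - x') = \zerovec$. Invariants (i)--(ii) are preserved by each step: points added to $X$ are oracle outputs and hence lie in $P$; a newly inserted equation $\transpose{d}x = \transpose{d}x^+$ is valid for $P$ because $\transpose{d}x^+ = \transpose{d}x^-$ forces $\transpose{d}x$ to be constant over $P$; and full row rank is retained because line~\ref{algo_affine_hull_direction} selects $d \notin \lin(\rows(D))$.

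The potential I would track is
\[ S \coloneqq \lin[ \rows(D) \cup \{ x - x' : x, x' \in X \} ], \qquad \Phi \coloneqq \dim(S). \]
By invariant (iii) the two generating families are orthogonal, hence independent, so $\Phi = |\rows(D)| + \max(|X| - 1, 0)$; in particular $\Phi \le n$ always, and $\Phi = 0$ initially. The heart of the proof is to show that every iteration raises $\Phi$ by exactly one. When $\transpose{d}x^+ = \transpose{d}x^-$, a new row $d$ is appended; it lies in $\aff(X)^\perp$ and outside $\lin(\rows(D))$, so it enlarges the row space by one dimension while $\aff(X)$ is unchanged. When $\transpose{d}x^+ \ne \transpose{d}x^-$ and $X \ne \emptyset$, a point $p \in \{x^+, x^-\}$ with $\transpose{d}p \ne \gamma$ is added; since $d \in \aff(X)^\perp$ makes $\transpose{d}$ constant on $\aff(X)$ (so that $\gamma$ is well defined), the difference $p - x'$ has a nonzero $d$-component and therefore escapes the current direction space, enlarging $\aff(X)$ by one dimension. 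The initialization step $X = \emptyset$ must be inspected on its own: whether it installs one equation together with the seed $x^+$, or the two points $\{x^+, x^-\}$, the row space and the direction space together still gain precisely one dimension. Finally, line~\ref{algo_affine_hull_direction} can always be executed inside the loop, since a vector in $\aff(X)^\perp \setminus \lin(\rows(D))$ exists exactly when $\lin(\rows(D)) \subsetneq \aff(X)^\perp$, i.e.\ exactly when $\Phi < n$.

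Since $\Phi$ starts at $0$, increases by one per iteration, and the loop guard keeps it running exactly while $\Phi < n$, the loop executes precisely $n$ iterations, each issuing two oracle queries; this gives the claimed $2n$ queries. At termination $\Phi = n$, so $S = \R^n$ and, writing $r = |\rows(D)|$, the identity $\Phi = r + (|X| - 1)$ yields $|X| - 1 = n - r$. Thus $\aff(X)$ and $\{x \in \R^n : Dx = e\}$ are affine subspaces of the common dimension $n - r$ with $\aff(X) \subseteq \{x : Dx = e\}$, hence they coincide. The chain $\aff(X) \subseteq \aff(P) \subseteq \{x : Dx = e\} = \aff(X)$ then collapses to equalities, giving $\aff(P) = \aff(X) = \{x : Dx = e\}$. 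Consequently $\dim(P) = |X| - 1$, so $|X| = \dim(P) + 1$ and the number of equations equals $r = n - (|X| - 1) = n - \dim(P)$, as required.

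I expect the uniform ``$+1$'' increment of $\Phi$ to be the crux, and inside it the initialization bookkeeping for $X = \emptyset$ --- where $\aff(\emptyset)^\perp = \R^n$ by convention and the branch structure differs from the generic step --- together with the argument that the freshly added point is genuinely affinely independent. Once the invariants and this increment are secured, the remaining ingredients (validity of each new equation, existence of the separating direction $d$, and the closing dimension sandwich) are routine.
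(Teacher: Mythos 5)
Your proof is correct and follows essentially the same route as the paper's: the same invariants (affine independence of $X$, validity and linear independence of the rows of $Dx=e$) and the same counting argument that each iteration increases $|\rows(D)|+|X|-1$ by exactly one, forcing $n$ iterations and hence $2n$ oracle queries. You are somewhat more explicit than the paper in justifying the unit increment via the orthogonality of the row space and the direction space of $\aff(X)$, the existence of the direction $d$ in each iteration, and the closing identification $\aff(P)=\aff(X)=\{x \in \R^n : Dx=e\}$, which the paper compresses into ``the result follows.''
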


\begin{proof}
  Since every point $x \in X$ was computed by the optimization oracle, we have $X \subseteq P$.
  Moreover, $Dx = e$ is valid for $P$ since for each equation $\transpose{d}x = \gamma$, $\min \{ \transpose{d}x : x \in P \} = \gamma = \max \{ \transpose{d}x : x \in P \}$ holds.
  We now prove by induction on the number of iterations that in every iteration of the algorithm, the points $x \in X$ are affinely independent and that the rows of $D$ are linearly independent.

  Initially, this is invariant holds because $X = \emptyset$ and $D$ has no rows.
  Whenever an equation $\transpose{d}x = \transpose{d}x^+$ is added to $Dx = e$ in step~\ref{algo_affine_hull_equation}, the vector $d$ is linearly independent to the rows of $D$ (see step~\ref{algo_affine_hull_direction}).
  If in step~\ref{algo_affine_hull_equation_point}, $X$ is initialized with a single point, it is clearly affinely independent.
  Similarly, the two points in step~\ref{algo_affine_hull_points} also form an affinely independent set since $\transpose{d}x^+ > \transpose{d}x^-$ implies $x^+ \neq x^-$.
  Suppose $X$ is augmented by $\bar{x} \coloneqq x^+$ in step~\ref{algo_affine_hull_maximizer} or by $\bar{x} \coloneqq x^-$ in step~\ref{algo_affine_hull_minimizer}.
  Note that by the choice of $d$ in step~\ref{algo_affine_hull_direction}, $\transpose{d}x = \gamma$ holds for all $x \in X$.
  Due to $\transpose{d}\bar{x} \neq \gamma$, $X \cup \{\bar{x}\}$ remains affinely independent.

  After the first iteration, we have $|X| + |\rows(C)| = 2$, and in every further iteration, this quantity increases by $1$.
  Hence, the algorithm requires $n$ iterations, each of which performs $2$ oracle queries.
  The result follows.
  \qed
\end{proof}

For a proof that the number $2n$ of oracle queries is optimal we refer to~\cite{Walter16}.

\paragraph{Implementation details.}
We now describe some details of the implementation within the software framework \IPO{}~\cite{IPO}.
In the description of Algorithm~\ref{algo_affine_hull} we did not specify step~\ref{algo_affine_hull_direction} precisely.
While one may enforce the requirement $d \notin \lin(\rows(D))$ via $d \in \rows(D)^\perp$, it turned out that this is numerically less stable than first computing a basis of $\aff(X)^\perp$ and selecting a basis element $d$ that is not in the span of $D$'s rows.
Moreover, we can take $d$'s sparsity and other numerical properties into account.
Sparsity can speed-up the overall computation since very sparse objective vectors are sometimes easier to optimize for a MIP solver.
In theory, for all $x \in X$, the products $\transpose{d}x$ have the same value.
However, due to floating-point arithmetic the computed values may differ.
It turned out that preferring directions $d$ for which the range of these products is small helps to avoid numerical difficulties.

In our application we compute the dimension of $P$ and of several of its faces.
We can exploit this by caching all points $x \in P$ returned by an optimization oracle in a set $\bar{X}$.
Then, for each face $F$ induced by an inequality $\transpose{a}x \leq \beta$ we can then compute the set $\bar{X}_F \coloneqq \{ x \in \bar{X} : \transpose{a}x = \beta \}$. 
Before querying the oracle we can then check the set $\bar{X}_F$ for a point with sufficiently large objective value, which saves two calls to the MIP solvers.

Let $\bar{D}x = \bar{e}$ be the equation system returned by Algorithm~\ref{algo_affine_hull} for $P$.
Now, for a face $F$ induced by inequality $\transpose{a}x \leq \beta$ we can initialize $Dx = e$ in the algorithm by $\bar{D}x = \bar{e}$.
Moreover, if $\transpose{a}x = \beta$ is not implied by $Dx = e$ we add this equation to $Dx = e$ as well.

Since the algorithm is implemented in floating-point arithmetic, errors can occur which may lead to wrong dimension results.
We checked the results using an exact arithmetic implementation of Algorithm~\ref{algo_affine_hull} for easier instances (with less than 200 variables).
For these, the computed dimension varied by at most $2$ from the true dimension.

In a first implementation, our code frequently reported dimension $-1$, and it turned out that often the right-hand side was only slightly larger than needed to make the inequality supporting.
Thus, for each inequality $\transpose{a}x \leq \beta$, normalized to $||a||_2 = 1$, we computed $\beta^{\text{true}} \coloneqq \max \{ \transpose{a}x : x \in P \}$ by a single oracle query.
Whenever we observed $\beta < \beta^{\text{true}} - 10^{-4}$, we considered the cut $\transpose{a}x \leq \beta$ as invalid (indicated by the symbol \faBolt{}).
If $\beta > \beta^{\text{true}} + 10^{-4}$, we declare the cut to be non-supporting.
Otherwise, we replace $\beta$ by $\beta^{\text{true}}$ before running Algorithm~\ref{algo_affine_hull}.

\section{Measuring the strength of a single inequality}
\label{sec_strength}

In this section we introduce our \emph{cut impact measure} for indicating, for a given cutting plane $\transpose{a}x \leq \beta$, how useful its addition in the context of branch-and-cut is.
The main goal of solving an LP at a branch-and-bound node is to determine a dual bound of the current subproblem.
If this bound is at most the value of the best feasible solution known so far, then the subproblem can be discarded.
Thus, we consider the value of such a bound (after adding a certain cut) in relation to the problem's optimum $z^\star \coloneqq \max \{ \transpose{c}x : Ax \leq b,~ x_i \in \Z ~\forall i \in I \}$ and the value $z^{\mathrm{LP}} \coloneqq \max \{ \transpose{c}x : Ax \leq b \}$ of the LP relaxation.

Our first approach was to just evaluate the dual bound obtained from the LP relaxation $Ax \leq b$ augmented by $\transpose{a}x \leq \beta$.
However, adding a single inequality often does not cut off the optimal face of the LP relaxation, which means that the bound does not change.
In a second attempt we tried to evaluate the dual bound of the LP relaxation augmented by a random selection of $k$ cutting planes.
However, the variance of the resulting cut impact measure was very large even after averaging over 10.000 such selections.

As a consequence, we discarded cut impact measures based on the combined impact of several cutting planes.
Instead we carried out the following steps for given cutting planes $\transpose{a_1} x \leq \beta_1$, $\transpose{a_2} x \leq \beta_2$, \dots, $\transpose{a_k} x \leq \beta_k$:
\begin{enumerate}
\item
  Compute the optimum $z^\star$ and optimum solution $x^\star \in P$.
\item
  Compute $z^{\mathrm{LP}}$.
\item
  For $i=0,1,2,\dotsc,k$, solve
  \begin{equation*}
    \text{max } \transpose{c}x \text{ s.t. } Ax \leq b,~ x_j \in \Z ~\forall j \in I \text{ and } \transpose{a_i} x \leq \beta_i \text{ if } i \geq 1
  \end{equation*}
  with $x^\star$ as an initial incumbent, with presolve, cutting planes and heuristics disabled and where a time limit of $60$ seconds is enforced.
\item
  Let $N$ denote the minimum number of branch-and-bound nodes used in any these $k+1$ runs.
\item
  For $i=0,1,\dotsc,k$, let $z^i$ be the dual bound obtained in run $i$ when stopping after $N$ branch-and-bound nodes.
  For $i \geq 1$, the \emph{closed gap of cut $i$} is defined as $(z^i - z^{\mathrm{LP}}) / (z^\star - z^{\mathrm{LP}})$.
  For $i = 0$, this yields the \emph{closed gap without cuts}.
\end{enumerate}
\paragraph{Remarks.}
The closed gap is essentially the dual bound, normalized such that a value of $0$ means no bound improvement over the root LP bound without cuts and a value of $1$ means that the instance was solved to optimality.
The effective limit of $N$ branch-and-bound nodes was introduced such that all runs reach this limit.
This circumvents the question of how to compare runs in which the problem was solved to optimality with those that could not solve it.
For all runs, presolve and domain propagation were disabled due to our focus on the branch-and-bound algorithm itself.
To avoid interaction with heuristics, the latter were disabled, but the optimal solution $x^\star$ was provided.

\section{Computational study}
\label{sec_computations}

In order to test Hypothesis~\ref{hypo_strength_dimension} we considered the 65 instances from the MIPLIB~3~\cite{MIPLIB3}.
For each of them we computed the dimension of the mixed-integer hull $P$, imposing a time limit of 10~minutes\footnote{All experiments were carried out on a single core of an Intel Core~i3 CPU running at 2.10\,GHz with 8\ GB RAM.}.
Moreover, we ran the state-of-the-art solver \SCIP{}~\cite{SCIP7} and collected all cutting planes generated in the root node, including those that were discarded by \SCIP{}'s cut selection routine\footnote{We disabled presolve, domain propagation, dual reductions, symmetry, and restarts.}.
For each of the cuts we computed the dimension of its induced face (see Section~\ref{sec_dimension}) as well as the closed gap after processing $N$ (as defined in Section~\ref{sec_strength}) branch-and-bound nodes.

While for instances \texttt{air03}, \texttt{air05}, \texttt{nw04}, no cuts were generated by \SCIP{}, our implementation of Algorithm~\ref{algo_affine_hull} ran into numerical difficulties during the computation of $\dim(P)$ for \texttt{set1ch}.
Moreover, $\dim(P)$ could not be computed within 10~minutes for instances \texttt{air04}, \texttt{arki001}, \texttt{cap6000}, \texttt{dano3mip}, \texttt{danoint}, \texttt{dsbmip}, \texttt{fast0507}, \texttt{gesa2\_0}, \texttt{gesa3\_0}, \texttt{l152lav}, \texttt{mas74}, \texttt{misc06}, \texttt{mitre}, \texttt{mkc}, \texttt{mod010}, \texttt{mod011}, \texttt{pk1}, \texttt{pp08aCUTS}, \texttt{pp08a}, \texttt{qnet1}, \texttt{rentacar}, \texttt{rout} and \texttt{swath}.

We evaluated the remaining 37 instances, whose characteristic data is shown in Table~\ref{table_instances}.
We distinguish how many cuts \SCIP{} found by which cut separation method, in particular to investigate whether certain separation routines tend to generate cuts with low- or high-dimensional faces.

We verified some of the invalid cutting planes manually, i.e., checked that these cuts are generated by \SCIP{} and that there exists a feasible solution that is indeed cut off.
The most likely reason for their occurrence is that \SCIP{} performed dual reductions although we disabled them via corresponding parameters\footnote{We set \texttt{misc/allowweakdualreds} and \texttt{misc/allowstrongdualreds} to \texttt{false}.}.

\DeclareDocumentCommand\sepaLinear{}{\begin{tikzpicture}
    \filldraw[draw=black,fill=black!20!white] (0,0) circle [radius=4pt];
    \filldraw[ultra thin,draw=black,fill=black] (0,0) -- +(90-30:4pt) arc[start angle=90-30, delta angle=60, radius=4pt] -- cycle;
  \end{tikzpicture}}
\DeclareDocumentCommand\sepaMIR{}{\begin{tikzpicture}
    \filldraw[draw=black,fill=black!20!white] (0,0) circle [radius=4pt];
    \filldraw[ultra thin,draw=black,fill=black] (0,0) -- +(90-3*30:4pt) arc[start angle=90-3*30, delta angle=60, radius=4pt] -- cycle;
  \end{tikzpicture}}
\DeclareDocumentCommand\sepaZeroHalf{}{\begin{tikzpicture}
    \filldraw[draw=black,fill=black!20!white] (0,0) circle [radius=4pt];
    \filldraw[ultra thin,draw=black,fill=black] (0,0) -- +(90-5*30:4pt) arc[start angle=90-5*30, delta angle=60, radius=4pt] -- cycle;
  \end{tikzpicture}}
\DeclareDocumentCommand\sepaStrongCg{}{\begin{tikzpicture}
    \filldraw[draw=black,fill=black!20!white] (0,0) circle [radius=4pt];
    \filldraw[ultra thin,draw=black,fill=black] (0,0) -- +(90-7*30:4pt) arc[start angle=90-7*30, delta angle=60, radius=4pt] -- cycle;
  \end{tikzpicture}}
\DeclareDocumentCommand\sepaFlowcover{}{\begin{tikzpicture}
    \filldraw[draw=black,fill=black!20!white] (0,0) circle [radius=4pt];
    \filldraw[ultra thin,draw=black,fill=black] (0,0) -- +(90-9*30:4pt) arc[start angle=90-9*30, delta angle=60, radius=4pt] -- cycle;
  \end{tikzpicture}}
\DeclareDocumentCommand\sepaMcf{}{\begin{tikzpicture}
    \filldraw[draw=black,fill=black!20!white] (0,0) circle [radius=4pt];
    \filldraw[ultra thin,draw=black,fill=black] (0,0) -- +(90-11*30:4pt) arc[start angle=90-11*30, delta angle=60, radius=4pt] -- cycle;
  \end{tikzpicture}}
  
\begin{table}[htb!]
  \caption{%
    Characteristics of the relevant 37 instances with number of successfully analyzed cuts, failures (numerical problems $0/0$, timeouts \faClockO{}, invalid cuts \faBolt{}), dimension of the mixed-integer hull, and number $N$ of branch-and-bound nodes (see Section~\ref{sec_strength}). \newline
    \sepaLinear{} -- Lifted extended weight inequalities~\cite{Wolter06,Martin99,Weismantel97} \newline 
    \sepaMIR{} -- Complemented Mixed-Integer Rounding (c-MIR) inequalities~\cite{MarchandW01,Wolter06} \newline 
    \sepaZeroHalf{} -- $\{0,1/2\}$-Chv{\'a}tal-Gomory inequalities~\cite{KosterZK09,CapraraF96} \newline 
    \sepaStrongCg{} -- Strengthened Chv{\'a}tal-Gomory inequalities~\cite{LetchfordL02b}  \newline 
    \sepaFlowcover{} -- Lifted flow-cover inequalities~\cite{GuNS99} \newline 
    \sepaMcf{} -- Multi-commodity flow inequalities~\cite{AchterbergR10} 
    }
  \label{table_instances}
  \begin{center}
    {\small{%
    \setlength{\tabcolsep}{1ex}
    \begin{tabular}{lr|rrrrrr|rrr|rrr}
      \textbf{Instance}
      & \textbf{Cuts}
      & \multicolumn{6}{c|}{\textbf{Analyzed by class}}
      & \multicolumn{3}{c|}{\textbf{Failed}}
      & \multicolumn{1}{c}{\textbf{Dim.}}
      & \multicolumn{1}{c}{\textbf{B\&B}}
      \\
      & \multicolumn{1}{c|}{\textbf{total}}
      & \multicolumn{1}{c|}{\sepaLinear{}}
      & \multicolumn{1}{c|}{\sepaMIR{}}
      & \multicolumn{1}{c|}{\sepaZeroHalf{}}
      & \multicolumn{1}{c|}{\sepaStrongCg{}}
      & \multicolumn{1}{c|}{\sepaFlowcover{}}
      & \multicolumn{1}{c|}{\sepaMcf{}}
      & \multicolumn{1}{c|}{$0/0$}
      & \multicolumn{1}{c|}{\faClockO{}}
      & \multicolumn{1}{c|}{\faBolt{}}
      & \multicolumn{1}{c}{\textbf{of \emph{P}}}
      & \multicolumn{1}{c}{\textbf{nodes}}
      \\ \hline
      bell3a & 25 &  & 25 &  &  &  &  &  &  &  & 121 & 53075 \\
      bell5 & 53 &  & 36 &  &  &  & 1 & 11 &  & 5 & 97 & 7815 \\
      blend2 & 8 &  & 7 &  &  &  &  &  &  & 1 & 245 & 597 \\
      dcmulti & 172 &  & 137 &  &  &  &  & 14 &  & 21 & 467 & 773 \\
      egout & 125 &  & 97 &  &  & 1 &  & 24 &  & 3 & 41 & 1 \\
      enigma & 82 &  & 59 & 1 & 17 & 5 &  &  &  &  & 3 & 1 \\
      fiber & 533 & 127 & 139 & 7 & 4 & 9 & 207 & 33 & 6 & 1 & 946 & 23395 \\
      fixnet6 & 772 &  & 612 &  &  &  & 83 & 42 &  & 35 & 779 & 53900 \\
      flugpl & 42 &  & 42 &  &  &  &  &  &  &  & 9 & 1753 \\
      gen & 28 & 17 & 8 &  &  &  &  & 3 &  &  & 540 & 21 \\
      gesa2 & 470 & 16 & 419 & 2 & 6 &  &  & 27 &  &  & 1176 & 21114 \\
      gesa3 & 259 &  & 194 & 2 & 6 & 15 &  & 38 & 4 &  & 1104 & 729 \\
      gt2 & 143 &  & 92 & 3 & 39 & 8 &  &  & 1 &  & 188 & 1 \\
      harp2 & 1028 & 659 & 210 & 9 & 4 & 112 &  & 5 & 28 & 1 & 1300 & 13748 \\
      khb05250 & 122 &  & 78 &  &  &  &  & 5 &  & 39 & 1229 & 425 \\
      lseu & 125 & 35 & 85 & 2 & 3 &  &  &  &  &  & 89 & 3495 \\
      markshare1 & 107 &  & 104 &  &  & 3 &  &  &  &  & 50 & 318260 \\
      markshare2 & 84 &  & 79 &  &  & 3 &  &  & 2 &  & 60 & 286347 \\
      mas76 & 225 &  & 204 &  &  &  &  &  & 1 & 20 & 151 & 211381 \\
      misc03 & 634 & 3 & 283 & 33 & 311 &  &  &  & 4 &  & 116 & 13 \\
      misc07 & 808 &  & 286 & 34 & 440 &  &  & 36 & 12 &  & 204 & 22839 \\
      mod008 & 441 & 119 & 272 &  &  &  &  &  & 3 & 47 & 319 & 1158 \\
      modglob & 268 &  & 186 &  &  &  &  & 3 & 1 & 78 & 327 & 112516 \\
      noswot & 164 &  & 151 &  & 11 &  &  & 2 &  &  & 120 & 160344 \\
      p0033 & 94 & 14 & 40 &  & 10 &  &  & 30 &  &  & 27 & 127 \\
      p0201 & 263 & 12 & 152 & 14 & 78 &  &  & 7 &  &  & 139 & 21 \\
      p0282 & 904 & 368 & 441 & 6 & 15 & 1 &  & 4 &  & 69 & 282 & 57 \\
      p0548 & 577 & 159 & 213 & 10 & 16 & 62 &  & 117 &  &  & 520 & 921 \\
      p2756 & 1000 & 82 & 96 & 22 & 48 & 55 &  & 26 & 671 &  & 2716 & 15149 \\
      qiu & 63 &  & 51 &  &  &  &  & 2 & 10 &  & 709 & 10406 \\
      qnet1 & 162 &  & 95 & 10 &  & 47 &  & 6 & 4 &  & 1233 & 5 \\
      rgn & 278 &  & 168 &  &  & 65 &  & 45 &  &  & 160 & 1691 \\
      seymour & 6246 &  & 656 & 53 & 5528 &  &  & 9 & 0 &  & 1255 & 9 \\
      stein27 & 886 &  & 517 & 9 & 360 &  &  &  &  &  & 27 & 3673 \\
      stein45 & 1613 &  & 1221 & 10 & 382 &  &  &  &  &  & 45 & 45371 \\
      vpm1 & 281 &  & 129 &  &  & 32 &  & 117 &  & 3 & 288 & 27881 \\
      vpm2 & 353 &  & 251 & 1 &  & 30 &  & 63 &  & 8 & 286 & 172271
    \end{tabular}
    }}
  \end{center}
\end{table}

\clearpage

Since the results on face dimensions and cut strength turned out to be very instance-specific, we created one plot per instance.
We omit the ones for \texttt{p2756} (too many failures, see Table~\ref{table_instances}), \texttt{blend2} (only 7 cuts analyzed), \texttt{enigma} ($\dim(P) = 3$ is very small), and for \texttt{markshare1}, \texttt{markshare2} and \texttt{noswot} (all cuts were ineffective).
Moreover, we present several plots in the Appendix~\ref{appendix_plots} since these are similar to those of other instances.

The plots show the dimension of the cuts (horizontal axis, rounded to 19 groups) together with their closed gap (vertical axis, 14 groups) according to Section~\ref{sec_strength}.
Each circle corresponds to a nonempty set of cuts, where the segments depict the respective cut classes (see Table~\ref{table_instances}) and their color depicts the number $k$ of cuts, where the largest occurring number $L$ is specified in the caption.
The colors are red ($0.9L < k \leq L$), orange ($0.7L < k \leq 0.9L$), yellow ($0.5L < k \leq 0.7L$), green ($0.3L < k \leq 0.5L$), turquoise ($0.1L < k \leq 0.3L$) and blue ($1 \leq k < 0.1L$).
For instance, the circle for \texttt{fixnet6} containing a red and a turquoise segment subsumes cutting planes with face dimensions between $730$ and $777$, and closed gap of approximately $0.45$.
As the legend next to the plot indicates, this circle represents $k \in [109,121]$ c-MIR cuts and $k' \in [13,36]$ multi-commodity flow cuts.
The dashed horizontal line indicates the closed gap without cuts (see Section~\ref{sec_strength}).

\bigskip

\DeclareDocumentCommand\drawMark{mm}{\filldraw[draw=black,fill=black!20!white] (axis cs:#1,#2) circle [radius=3.5pt];}
\DeclareDocumentCommand\drawTriangle{mmmm}{\filldraw[ultra thin,draw=black,fill=#4] (axis cs:#1,#2) -- +(#3:3.5pt) arc[start angle=#3, delta angle=60, radius=3.5pt] -- cycle;}





\message{Instance <fixnet6>.}
  \begin{tikzpicture}
    \begin{axis}[
      enlargelimits = 0.05,
      xlabel = {\textbf{fixnet6}: \textcolor{red}{$L = 121$} cuts},
      width = 75mm, 
      height = 60mm, 
      xtick={0,4,8,12,16},
      xticklabels={0,$194$,$389$,$583$,$778$},
      xmin = -1,
      xmax = 17,
      ymin = 0.393586,
      ymax = 0.612671,
      grid = major,
      colormap = {blueorangered}{rgb255(0cm)=(0,0,255); rgb255(1cm)=(255,128,0); rgb255(2cm)=(255,0,0)},
      colormap name = blueorangered,
    ]
    \draw[black!50!white,thick,dashed] (axis cs:-2,0.440533) -- (axis cs:18,0.440533);
    \drawMark{-1}{0.424884} \drawTriangle{-1}{0.424884}{-240}{blue} \drawTriangle{-1}{0.424884}{0}{blue}
    \drawMark{-1}{0.440533} \drawTriangle{-1}{0.440533}{0}{blue}
    \drawMark{-1}{0.456182} \drawTriangle{-1}{0.456182}{0}{blue}
    \drawMark{-1}{0.471831} \drawTriangle{-1}{0.471831}{-240}{blue} \drawTriangle{-1}{0.471831}{0}{blue}
    \drawMark{0}{0.471831} \drawTriangle{0}{0.471831}{0}{blue}
    \drawMark{12}{0.471831} \drawTriangle{12}{0.471831}{0}{blue}
    \drawMark{13}{0.440533} \drawTriangle{13}{0.440533}{0}{blue}
    \drawMark{13}{0.456182} \drawTriangle{13}{0.456182}{-240}{blue} \drawTriangle{13}{0.456182}{0}{blue}
    \drawMark{13}{0.471831} \drawTriangle{13}{0.471831}{-240}{blue} \drawTriangle{13}{0.471831}{0}{blue}
    \drawMark{13}{0.48748} \drawTriangle{13}{0.48748}{-240}{blue} \drawTriangle{13}{0.48748}{0}{blue}
    \drawMark{14}{0.424884} \drawTriangle{14}{0.424884}{0}{blue}
    \drawMark{14}{0.440533} \drawTriangle{14}{0.440533}{-240}{blue} \drawTriangle{14}{0.440533}{0}{blue}
    \drawMark{14}{0.456182} \drawTriangle{14}{0.456182}{-240}{blue} \drawTriangle{14}{0.456182}{0}{cyan}
    \drawMark{14}{0.471831} \drawTriangle{14}{0.471831}{-240}{blue} \drawTriangle{14}{0.471831}{0}{blue}
    \drawMark{14}{0.48748} \drawTriangle{14}{0.48748}{0}{blue}
    \drawMark{14}{0.503128} \drawTriangle{14}{0.503128}{0}{blue}
    \drawMark{15}{0.409235} \drawTriangle{15}{0.409235}{0}{blue}
    \drawMark{15}{0.424884} \drawTriangle{15}{0.424884}{-240}{blue} \drawTriangle{15}{0.424884}{0}{cyan}
    \drawMark{15}{0.440533} \drawTriangle{15}{0.440533}{-240}{blue} \drawTriangle{15}{0.440533}{0}{yellow}
    \drawMark{15}{0.456182} \drawTriangle{15}{0.456182}{-240}{cyan} \drawTriangle{15}{0.456182}{0}{red}
    \drawMark{15}{0.471831} \drawTriangle{15}{0.471831}{-240}{cyan} \drawTriangle{15}{0.471831}{0}{yellow}
    \drawMark{15}{0.48748} \drawTriangle{15}{0.48748}{-240}{cyan} \drawTriangle{15}{0.48748}{0}{green}
    \drawMark{15}{0.503128} \drawTriangle{15}{0.503128}{-240}{blue} \drawTriangle{15}{0.503128}{0}{blue}
    \drawMark{15}{0.518777} \drawTriangle{15}{0.518777}{-240}{blue} \drawTriangle{15}{0.518777}{0}{blue}
    \drawMark{15}{0.534426} \drawTriangle{15}{0.534426}{0}{blue}
    \drawMark{15}{0.550075} \drawTriangle{15}{0.550075}{0}{blue}
    \drawMark{15}{0.565724} \drawTriangle{15}{0.565724}{0}{blue}
    \drawMark{15}{0.581373} \drawTriangle{15}{0.581373}{0}{blue}
    \drawMark{15}{0.612671} \drawTriangle{15}{0.612671}{0}{blue}
    \drawMark{16}{0.393586} \drawTriangle{16}{0.393586}{0}{blue}
    \drawMark{16}{0.409235} \drawTriangle{16}{0.409235}{0}{blue}
    \drawMark{16}{0.424884} \drawTriangle{16}{0.424884}{0}{cyan}
    \drawMark{16}{0.440533} \drawTriangle{16}{0.440533}{0}{green}
    \drawMark{16}{0.456182} \drawTriangle{16}{0.456182}{0}{green}
    \drawMark{16}{0.471831} \drawTriangle{16}{0.471831}{0}{green}
    \drawMark{16}{0.48748} \drawTriangle{16}{0.48748}{0}{cyan}
    \drawMark{16}{0.503128} \drawTriangle{16}{0.503128}{0}{blue}
    \end{axis}
  \end{tikzpicture}
  \begin{tikzpicture}[overlay, xshift=18mm, yshift=6mm]
    \node[anchor=west] at (-0.1,4.6) {\textbf{Segment colors for \texttt{fixnet6}:}};
    \foreach \y/\c/\a/\b in {0/blue/1/12, 1/cyan/13/36, 2/green/37/60, 3/yellow/61/84, 4/orange/85/108, 5/red/109/121}
    {
      \filldraw[thick,draw=black,fill=\c] (0,3.9-0.5*\y) rectangle +(0.25,0.25);
      \node[anchor=west] at (0.45,4-0.5*\y) {between $\a$ and $\b$ cuts};
    }
  \end{tikzpicture}

\bigskip

\message{Instance <misc03>.}
  \begin{tikzpicture}
    \begin{axis}[
      enlargelimits = 0.05,
      xlabel = {\textbf{misc03}: \textcolor{red}{$L = 103$} cuts},
      width = 75mm, 
      height = 60mm, 
      xtick={0,4,8,12,16},
      xticklabels={0,$28$,$57$,$86$,$115$},
      xmin = -1,
      xmax = 17,
      ymin = 0,
      ymax = 1,
      grid = major,
      colormap = {blueorangered}{rgb255(0cm)=(0,0,255); rgb255(1cm)=(255,128,0); rgb255(2cm)=(255,0,0)},
      colormap name = blueorangered,
    ]
    \draw[black!50!white,thick,dashed] (axis cs:-2,0.0714286) -- (axis cs:18,0.0714286);
    \drawMark{-1}{0} \drawTriangle{-1}{0}{-120}{blue} \drawTriangle{-1}{0}{0}{blue}
    \drawMark{-1}{0.0714286} \drawTriangle{-1}{0.0714286}{-120}{yellow} \drawTriangle{-1}{0.0714286}{0}{red} \drawTriangle{-1}{0.0714286}{-60}{blue}
    \drawMark{-1}{0.142857} \drawTriangle{-1}{0.142857}{-120}{orange} \drawTriangle{-1}{0.142857}{0}{yellow} \drawTriangle{-1}{0.142857}{-60}{blue}
    \drawMark{-1}{0.214286} \drawTriangle{-1}{0.214286}{-120}{green} \drawTriangle{-1}{0.214286}{0}{blue}
    \drawMark{-1}{0.285714} \drawTriangle{-1}{0.285714}{-120}{yellow} \drawTriangle{-1}{0.285714}{0}{cyan}
    \drawMark{-1}{0.357143} \drawTriangle{-1}{0.357143}{-120}{cyan} \drawTriangle{-1}{0.357143}{0}{blue}
    \drawMark{-1}{0.428571} \drawTriangle{-1}{0.428571}{-120}{blue} \drawTriangle{-1}{0.428571}{0}{blue}
    \drawMark{-1}{0.5} \drawTriangle{-1}{0.5}{-120}{blue}
    \drawMark{-1}{0.714286} \drawTriangle{-1}{0.714286}{-120}{blue}
    \drawMark{-1}{1} \drawTriangle{-1}{1}{-120}{blue}
    \drawMark{0}{0.0714286} \drawTriangle{0}{0.0714286}{-120}{blue} \drawTriangle{0}{0.0714286}{0}{cyan}
    \drawMark{1}{0.0714286} \drawTriangle{1}{0.0714286}{-120}{blue} \drawTriangle{1}{0.0714286}{0}{cyan}
    \drawMark{1}{0.142857} \drawTriangle{1}{0.142857}{0}{blue}
    \drawMark{2}{0.0714286} \drawTriangle{2}{0.0714286}{-120}{blue}
    \drawMark{4}{0.0714286} \drawTriangle{4}{0.0714286}{0}{cyan}
    \drawMark{4}{0.142857} \drawTriangle{4}{0.142857}{0}{blue}
    \drawMark{5}{0.0714286} \drawTriangle{5}{0.0714286}{-60}{blue}
    \drawMark{6}{0.0714286} \drawTriangle{6}{0.0714286}{0}{cyan}
    \drawMark{8}{0.0714286} \drawTriangle{8}{0.0714286}{-120}{blue} \drawTriangle{8}{0.0714286}{-60}{blue}
    \drawMark{10}{0.0714286} \drawTriangle{10}{0.0714286}{0}{blue}
    \drawMark{11}{0.0714286} \drawTriangle{11}{0.0714286}{-120}{blue} \drawTriangle{11}{0.0714286}{0}{blue}
    \drawMark{12}{0.0714286} \drawTriangle{12}{0.0714286}{-60}{blue}
    \drawMark{13}{0.0714286} \drawTriangle{13}{0.0714286}{0}{blue} \drawTriangle{13}{0.0714286}{-60}{blue}
    \drawMark{14}{0.0714286} \drawTriangle{14}{0.0714286}{-60}{blue}
    \drawMark{15}{0.0714286} \drawTriangle{15}{0.0714286}{-60}{cyan}
    \drawMark{16}{0.0714286} \drawTriangle{16}{0.0714286}{60}{blue}
    \end{axis}
  \end{tikzpicture}
\hfill
\message{Instance <gt2>.}
  \begin{tikzpicture}
    \begin{axis}[
      enlargelimits = 0.05,
      xlabel = {\textbf{gt2}: \textcolor{red}{$L = 48$} cuts},
      width = 75mm, 
      height = 60mm, 
      xtick={0,4,8,12,16},
      xticklabels={0,$46$,$93$,$140$,$187$},
      xmin = -1,
      xmax = 17,
      ymin = 0.616471,
      ymax = 1,
      grid = major,
      colormap = {blueorangered}{rgb255(0cm)=(0,0,255); rgb255(1cm)=(255,128,0); rgb255(2cm)=(255,0,0)},
      colormap name = blueorangered,
    ]
    \draw[black!50!white,thick,dashed] (axis cs:-2,0.616471) -- (axis cs:18,0.616471);
    \drawMark{-1}{0.616471} \drawTriangle{-1}{0.616471}{-120}{blue} \drawTriangle{-1}{0.616471}{0}{blue}
    \drawMark{-1}{0.643866} \drawTriangle{-1}{0.643866}{-120}{blue}
    \drawMark{-1}{0.671261} \drawTriangle{-1}{0.671261}{-120}{blue}
    \drawMark{-1}{0.753446} \drawTriangle{-1}{0.753446}{0}{blue}
    \drawMark{-1}{0.808235} \drawTriangle{-1}{0.808235}{0}{blue}
    \drawMark{-1}{0.94521} \drawTriangle{-1}{0.94521}{0}{blue}
    \drawMark{1}{0.616471} \drawTriangle{1}{0.616471}{-120}{blue}
    \drawMark{1}{0.671261} \drawTriangle{1}{0.671261}{-120}{blue}
    \drawMark{2}{0.616471} \drawTriangle{2}{0.616471}{-120}{blue}
    \drawMark{2}{0.643866} \drawTriangle{2}{0.643866}{-120}{blue} \drawTriangle{2}{0.643866}{0}{blue}
    \drawMark{2}{0.698656} \drawTriangle{2}{0.698656}{0}{blue}
    \drawMark{3}{0.643866} \drawTriangle{3}{0.643866}{-120}{blue}
    \drawMark{8}{1} \drawTriangle{8}{1}{-120}{blue} \drawTriangle{8}{1}{0}{cyan}
    \drawMark{11}{1} \drawTriangle{11}{1}{-120}{blue}
    \drawMark{12}{0.616471} \drawTriangle{12}{0.616471}{-120}{blue} \drawTriangle{12}{0.616471}{0}{blue}
    \drawMark{13}{0.616471} \drawTriangle{13}{0.616471}{-120}{blue} \drawTriangle{13}{0.616471}{0}{blue}
    \drawMark{14}{0.616471} \drawTriangle{14}{0.616471}{-120}{blue} \drawTriangle{14}{0.616471}{-180}{blue} \drawTriangle{14}{0.616471}{0}{cyan} \drawTriangle{14}{0.616471}{-60}{blue}
    \drawMark{15}{0.616471} \drawTriangle{15}{0.616471}{-120}{cyan} \drawTriangle{15}{0.616471}{-180}{blue} \drawTriangle{15}{0.616471}{0}{red} \drawTriangle{15}{0.616471}{-60}{blue}
    \drawMark{15}{0.643866} \drawTriangle{15}{0.643866}{0}{blue}
    \drawMark{16}{0.616471} \drawTriangle{16}{0.616471}{-120}{blue} \drawTriangle{16}{0.616471}{-180}{blue} \drawTriangle{16}{0.616471}{0}{cyan} \drawTriangle{16}{0.616471}{-60}{blue}
    \drawMark{16}{0.643866} \drawTriangle{16}{0.643866}{-180}{blue} \drawTriangle{16}{0.643866}{0}{blue}
    \end{axis}
  \end{tikzpicture}

\bigskip
  
The first three plots already highlight that the results are very heterogeneous: while the faces of the strongest cuts in \texttt{fixnet6} have a high dimension, the strongest ones for \texttt{misc03} are not even supporting.
Even when considering non-supporting cuts as outliers, the dimension does not indicate practical strength, as the plot for \texttt{gt2} shows.
A quick look at the other plots lets us conclude that Hypothesis~\ref{hypo_strength_dimension} is false --- at least for the strength measure from Section~\ref{sec_strength}.

\bigskip

\message{Instance <harp2>.}
  \begin{tikzpicture}
    \begin{axis}[
      enlargelimits = 0.05,
      xlabel = {\textbf{harp2}: \textcolor{red}{$L = 186$} cuts},
      width = 75mm, 
      height = 60mm, 
      xtick={0,4,8,12,16},
      xticklabels={0,$324$,$649$,$974$,$1299$},
      xmin = -1,
      xmax = 17,
      ymin = 0.337878,
      ymax = 0.668018,
      grid = major,
      colormap = {blueorangered}{rgb255(0cm)=(0,0,255); rgb255(1cm)=(255,128,0); rgb255(2cm)=(255,0,0)},
      colormap name = blueorangered,
    ]
    \draw[black!50!white,thick,dashed] (axis cs:-2,0.597274) -- (axis cs:18,0.597274);
    \drawMark{-1}{0.479367} \drawTriangle{-1}{0.479367}{0}{blue}
    \drawMark{-1}{0.573692} \drawTriangle{-1}{0.573692}{-180}{blue}
    \drawMark{-1}{0.597274} \drawTriangle{-1}{0.597274}{60}{blue}
    \drawMark{-1}{0.620855} \drawTriangle{-1}{0.620855}{60}{blue}
    \drawMark{12}{0.408622} \drawTriangle{12}{0.408622}{0}{blue}
    \drawMark{12}{0.432204} \drawTriangle{12}{0.432204}{0}{blue}
    \drawMark{12}{0.455785} \drawTriangle{12}{0.455785}{0}{blue}
    \drawMark{12}{0.479367} \drawTriangle{12}{0.479367}{0}{blue}
    \drawMark{12}{0.620855} \drawTriangle{12}{0.620855}{0}{blue}
    \drawMark{13}{0.385041} \drawTriangle{13}{0.385041}{0}{blue}
    \drawMark{13}{0.432204} \drawTriangle{13}{0.432204}{0}{blue} \drawTriangle{13}{0.432204}{-60}{blue}
    \drawMark{13}{0.455785} \drawTriangle{13}{0.455785}{0}{blue} \drawTriangle{13}{0.455785}{-60}{blue}
    \drawMark{13}{0.479367} \drawTriangle{13}{0.479367}{0}{blue}
    \drawMark{13}{0.502948} \drawTriangle{13}{0.502948}{0}{blue}
    \drawMark{14}{0.337878} \drawTriangle{14}{0.337878}{0}{blue}
    \drawMark{14}{0.385041} \drawTriangle{14}{0.385041}{0}{blue}
    \drawMark{14}{0.408622} \drawTriangle{14}{0.408622}{0}{blue}
    \drawMark{14}{0.432204} \drawTriangle{14}{0.432204}{0}{blue} \drawTriangle{14}{0.432204}{-60}{blue}
    \drawMark{14}{0.455785} \drawTriangle{14}{0.455785}{-180}{blue} \drawTriangle{14}{0.455785}{0}{blue} \drawTriangle{14}{0.455785}{-60}{blue}
    \drawMark{14}{0.479367} \drawTriangle{14}{0.479367}{0}{blue}
    \drawMark{14}{0.502948} \drawTriangle{14}{0.502948}{0}{blue}
    \drawMark{14}{0.526529} \drawTriangle{14}{0.526529}{0}{blue}
    \drawMark{14}{0.550111} \drawTriangle{14}{0.550111}{0}{blue}
    \drawMark{14}{0.573692} \drawTriangle{14}{0.573692}{-180}{blue} \drawTriangle{14}{0.573692}{0}{blue}
    \drawMark{14}{0.597274} \drawTriangle{14}{0.597274}{-120}{blue} \drawTriangle{14}{0.597274}{-180}{blue} \drawTriangle{14}{0.597274}{0}{blue}
    \drawMark{14}{0.620855} \drawTriangle{14}{0.620855}{-120}{blue}
    \drawMark{14}{0.644437} \drawTriangle{14}{0.644437}{-120}{blue}
    \drawMark{15}{0.361459} \drawTriangle{15}{0.361459}{-180}{blue} \drawTriangle{15}{0.361459}{60}{blue} \drawTriangle{15}{0.361459}{0}{blue}
    \drawMark{15}{0.385041} \drawTriangle{15}{0.385041}{-180}{blue} \drawTriangle{15}{0.385041}{60}{blue} \drawTriangle{15}{0.385041}{0}{blue}
    \drawMark{15}{0.408622} \drawTriangle{15}{0.408622}{-180}{blue} \drawTriangle{15}{0.408622}{60}{blue} \drawTriangle{15}{0.408622}{0}{blue}
    \drawMark{15}{0.432204} \drawTriangle{15}{0.432204}{-180}{blue} \drawTriangle{15}{0.432204}{60}{cyan} \drawTriangle{15}{0.432204}{0}{cyan}
    \drawMark{15}{0.455785} \drawTriangle{15}{0.455785}{-180}{blue} \drawTriangle{15}{0.455785}{60}{cyan} \drawTriangle{15}{0.455785}{0}{blue}
    \drawMark{15}{0.479367} \drawTriangle{15}{0.479367}{-180}{blue} \drawTriangle{15}{0.479367}{60}{blue} \drawTriangle{15}{0.479367}{0}{cyan}
    \drawMark{15}{0.502948} \drawTriangle{15}{0.502948}{-180}{blue} \drawTriangle{15}{0.502948}{60}{blue} \drawTriangle{15}{0.502948}{0}{blue}
    \drawMark{15}{0.526529} \drawTriangle{15}{0.526529}{-180}{blue} \drawTriangle{15}{0.526529}{60}{blue} \drawTriangle{15}{0.526529}{0}{blue}
    \drawMark{15}{0.550111} \drawTriangle{15}{0.550111}{0}{blue}
    \drawMark{15}{0.573692} \drawTriangle{15}{0.573692}{-180}{blue} \drawTriangle{15}{0.573692}{60}{cyan} \drawTriangle{15}{0.573692}{0}{blue}
    \drawMark{15}{0.597274} \drawTriangle{15}{0.597274}{-180}{cyan} \drawTriangle{15}{0.597274}{60}{orange} \drawTriangle{15}{0.597274}{0}{cyan} \drawTriangle{15}{0.597274}{-60}{blue}
    \drawMark{15}{0.620855} \drawTriangle{15}{0.620855}{-180}{blue} \drawTriangle{15}{0.620855}{60}{blue} \drawTriangle{15}{0.620855}{0}{blue} \drawTriangle{15}{0.620855}{-60}{blue}
    \drawMark{15}{0.644437} \drawTriangle{15}{0.644437}{60}{blue}
    \drawMark{15}{0.668018} \drawTriangle{15}{0.668018}{-180}{blue}
    \drawMark{16}{0.361459} \drawTriangle{16}{0.361459}{60}{blue}
    \drawMark{16}{0.385041} \drawTriangle{16}{0.385041}{60}{blue}
    \drawMark{16}{0.408622} \drawTriangle{16}{0.408622}{60}{blue}
    \drawMark{16}{0.432204} \drawTriangle{16}{0.432204}{60}{cyan}
    \drawMark{16}{0.455785} \drawTriangle{16}{0.455785}{60}{cyan}
    \drawMark{16}{0.479367} \drawTriangle{16}{0.479367}{60}{cyan}
    \drawMark{16}{0.502948} \drawTriangle{16}{0.502948}{60}{blue}
    \drawMark{16}{0.526529} \drawTriangle{16}{0.526529}{60}{blue}
    \drawMark{16}{0.550111} \drawTriangle{16}{0.550111}{60}{blue}
    \drawMark{16}{0.573692} \drawTriangle{16}{0.573692}{60}{cyan}
    \drawMark{16}{0.597274} \drawTriangle{16}{0.597274}{60}{red}
    \drawMark{16}{0.620855} \drawTriangle{16}{0.620855}{60}{cyan}
    \end{axis}
  \end{tikzpicture}
\hfill
\message{Instance <mod008>.}
  \begin{tikzpicture}
    \begin{axis}[
      enlargelimits = 0.05,
      xlabel = {\textbf{mod008}: \textcolor{red}{$L = 19$} cuts},
      width = 75mm, 
      height = 60mm, 
      xtick={0,4,8,12,16},
      xticklabels={0,$79$,$159$,$238$,$318$},
      xmin = -1,
      xmax = 17,
      ymin = 0.3542,
      ymax = 0.784462,
      grid = major,
      colormap = {blueorangered}{rgb255(0cm)=(0,0,255); rgb255(1cm)=(255,128,0); rgb255(2cm)=(255,0,0)},
      colormap name = blueorangered,
    ]
    \draw[black!50!white,thick,dashed] (axis cs:-2,0.569331) -- (axis cs:18,0.569331);
    \drawMark{-1}{0.415666} \drawTriangle{-1}{0.415666}{0}{blue}
    \drawMark{-1}{0.446399} \drawTriangle{-1}{0.446399}{0}{blue}
    \drawMark{-1}{0.477132} \drawTriangle{-1}{0.477132}{0}{blue}
    \drawMark{-1}{0.507865} \drawTriangle{-1}{0.507865}{0}{blue}
    \drawMark{-1}{0.538598} \drawTriangle{-1}{0.538598}{0}{blue}
    \drawMark{-1}{0.569331} \drawTriangle{-1}{0.569331}{0}{blue}
    \drawMark{-1}{0.600064} \drawTriangle{-1}{0.600064}{0}{cyan}
    \drawMark{-1}{0.630797} \drawTriangle{-1}{0.630797}{0}{cyan}
    \drawMark{-1}{0.66153} \drawTriangle{-1}{0.66153}{0}{blue}
    \drawMark{-1}{0.784462} \drawTriangle{-1}{0.784462}{0}{blue}
    \drawMark{0}{0.3542} \drawTriangle{0}{0.3542}{0}{blue}
    \drawMark{0}{0.415666} \drawTriangle{0}{0.415666}{0}{blue}
    \drawMark{0}{0.446399} \drawTriangle{0}{0.446399}{0}{blue}
    \drawMark{0}{0.507865} \drawTriangle{0}{0.507865}{0}{blue}
    \drawMark{0}{0.538598} \drawTriangle{0}{0.538598}{0}{blue}
    \drawMark{0}{0.569331} \drawTriangle{0}{0.569331}{0}{cyan}
    \drawMark{0}{0.600064} \drawTriangle{0}{0.600064}{0}{blue}
    \drawMark{0}{0.630797} \drawTriangle{0}{0.630797}{0}{blue}
    \drawMark{0}{0.66153} \drawTriangle{0}{0.66153}{0}{blue}
    \drawMark{1}{0.384933} \drawTriangle{1}{0.384933}{0}{blue}
    \drawMark{1}{0.446399} \drawTriangle{1}{0.446399}{0}{blue}
    \drawMark{1}{0.507865} \drawTriangle{1}{0.507865}{0}{cyan}
    \drawMark{1}{0.538598} \drawTriangle{1}{0.538598}{0}{blue}
    \drawMark{1}{0.569331} \drawTriangle{1}{0.569331}{0}{cyan}
    \drawMark{1}{0.600064} \drawTriangle{1}{0.600064}{0}{cyan}
    \drawMark{1}{0.630797} \drawTriangle{1}{0.630797}{0}{blue}
    \drawMark{4}{0.507865} \drawTriangle{4}{0.507865}{0}{blue}
    \drawMark{5}{0.446399} \drawTriangle{5}{0.446399}{0}{blue}
    \drawMark{5}{0.569331} \drawTriangle{5}{0.569331}{0}{cyan}
    \drawMark{5}{0.600064} \drawTriangle{5}{0.600064}{0}{cyan}
    \drawMark{5}{0.630797} \drawTriangle{5}{0.630797}{0}{cyan}
    \drawMark{5}{0.66153} \drawTriangle{5}{0.66153}{0}{blue}
    \drawMark{5}{0.692263} \drawTriangle{5}{0.692263}{0}{blue}
    \drawMark{6}{0.477132} \drawTriangle{6}{0.477132}{0}{blue}
    \drawMark{6}{0.569331} \drawTriangle{6}{0.569331}{0}{blue}
    \drawMark{6}{0.600064} \drawTriangle{6}{0.600064}{0}{blue}
    \drawMark{6}{0.630797} \drawTriangle{6}{0.630797}{0}{cyan}
    \drawMark{6}{0.66153} \drawTriangle{6}{0.66153}{0}{blue}
    \drawMark{7}{0.507865} \drawTriangle{7}{0.507865}{0}{blue}
    \drawMark{7}{0.569331} \drawTriangle{7}{0.569331}{0}{blue}
    \drawMark{7}{0.600064} \drawTriangle{7}{0.600064}{0}{cyan}
    \drawMark{7}{0.630797} \drawTriangle{7}{0.630797}{0}{blue}
    \drawMark{7}{0.66153} \drawTriangle{7}{0.66153}{0}{blue}
    \drawMark{7}{0.692263} \drawTriangle{7}{0.692263}{0}{blue}
    \drawMark{8}{0.446399} \drawTriangle{8}{0.446399}{0}{blue}
    \drawMark{8}{0.507865} \drawTriangle{8}{0.507865}{0}{blue}
    \drawMark{8}{0.538598} \drawTriangle{8}{0.538598}{0}{blue}
    \drawMark{8}{0.569331} \drawTriangle{8}{0.569331}{0}{cyan}
    \drawMark{8}{0.600064} \drawTriangle{8}{0.600064}{0}{blue}
    \drawMark{8}{0.630797} \drawTriangle{8}{0.630797}{0}{cyan}
    \drawMark{9}{0.415666} \drawTriangle{9}{0.415666}{0}{blue}
    \drawMark{9}{0.477132} \drawTriangle{9}{0.477132}{0}{blue}
    \drawMark{9}{0.507865} \drawTriangle{9}{0.507865}{0}{blue}
    \drawMark{9}{0.538598} \drawTriangle{9}{0.538598}{0}{blue}
    \drawMark{9}{0.569331} \drawTriangle{9}{0.569331}{0}{cyan}
    \drawMark{9}{0.600064} \drawTriangle{9}{0.600064}{0}{cyan}
    \drawMark{9}{0.630797} \drawTriangle{9}{0.630797}{0}{blue}
    \drawMark{9}{0.66153} \drawTriangle{9}{0.66153}{0}{blue}
    \drawMark{9}{0.692263} \drawTriangle{9}{0.692263}{0}{blue}
    \drawMark{10}{0.384933} \drawTriangle{10}{0.384933}{0}{blue}
    \drawMark{10}{0.477132} \drawTriangle{10}{0.477132}{0}{blue}
    \drawMark{10}{0.507865} \drawTriangle{10}{0.507865}{0}{blue}
    \drawMark{10}{0.538598} \drawTriangle{10}{0.538598}{0}{blue}
    \drawMark{10}{0.600064} \drawTriangle{10}{0.600064}{0}{blue}
    \drawMark{10}{0.630797} \drawTriangle{10}{0.630797}{0}{cyan}
    \drawMark{10}{0.66153} \drawTriangle{10}{0.66153}{0}{cyan}
    \drawMark{10}{0.692263} \drawTriangle{10}{0.692263}{0}{blue}
    \drawMark{10}{0.753729} \drawTriangle{10}{0.753729}{0}{blue}
    \drawMark{11}{0.477132} \drawTriangle{11}{0.477132}{0}{blue}
    \drawMark{11}{0.538598} \drawTriangle{11}{0.538598}{0}{blue}
    \drawMark{11}{0.600064} \drawTriangle{11}{0.600064}{0}{green}
    \drawMark{11}{0.630797} \drawTriangle{11}{0.630797}{0}{green}
    \drawMark{11}{0.66153} \drawTriangle{11}{0.66153}{0}{cyan}
    \drawMark{11}{0.692263} \drawTriangle{11}{0.692263}{0}{blue}
    \drawMark{11}{0.753729} \drawTriangle{11}{0.753729}{0}{blue}
    \drawMark{12}{0.507865} \drawTriangle{12}{0.507865}{0}{cyan}
    \drawMark{12}{0.569331} \drawTriangle{12}{0.569331}{0}{blue}
    \drawMark{12}{0.600064} \drawTriangle{12}{0.600064}{0}{cyan}
    \drawMark{12}{0.630797} \drawTriangle{12}{0.630797}{0}{blue}
    \drawMark{12}{0.66153} \drawTriangle{12}{0.66153}{0}{cyan}
    \drawMark{12}{0.722996} \drawTriangle{12}{0.722996}{0}{blue}
    \drawMark{13}{0.384933} \drawTriangle{13}{0.384933}{0}{blue}
    \drawMark{13}{0.477132} \drawTriangle{13}{0.477132}{0}{cyan}
    \drawMark{13}{0.507865} \drawTriangle{13}{0.507865}{0}{blue}
    \drawMark{13}{0.538598} \drawTriangle{13}{0.538598}{0}{blue}
    \drawMark{13}{0.569331} \drawTriangle{13}{0.569331}{60}{blue}
    \drawMark{13}{0.600064} \drawTriangle{13}{0.600064}{0}{cyan}
    \drawMark{13}{0.630797} \drawTriangle{13}{0.630797}{0}{green}
    \drawMark{13}{0.66153} \drawTriangle{13}{0.66153}{0}{cyan}
    \drawMark{13}{0.692263} \drawTriangle{13}{0.692263}{0}{cyan}
    \drawMark{14}{0.3542} \drawTriangle{14}{0.3542}{0}{blue}
    \drawMark{14}{0.446399} \drawTriangle{14}{0.446399}{60}{blue}
    \drawMark{14}{0.507865} \drawTriangle{14}{0.507865}{60}{blue}
    \drawMark{14}{0.538598} \drawTriangle{14}{0.538598}{60}{cyan} \drawTriangle{14}{0.538598}{0}{cyan}
    \drawMark{14}{0.569331} \drawTriangle{14}{0.569331}{60}{blue}
    \drawMark{14}{0.600064} \drawTriangle{14}{0.600064}{60}{blue} \drawTriangle{14}{0.600064}{0}{cyan}
    \drawMark{14}{0.630797} \drawTriangle{14}{0.630797}{60}{blue} \drawTriangle{14}{0.630797}{0}{green}
    \drawMark{14}{0.66153} \drawTriangle{14}{0.66153}{60}{blue} \drawTriangle{14}{0.66153}{0}{cyan}
    \drawMark{14}{0.692263} \drawTriangle{14}{0.692263}{60}{blue} \drawTriangle{14}{0.692263}{0}{blue}
    \drawMark{14}{0.722996} \drawTriangle{14}{0.722996}{60}{blue} \drawTriangle{14}{0.722996}{0}{blue}
    \drawMark{15}{0.384933} \drawTriangle{15}{0.384933}{0}{blue}
    \drawMark{15}{0.415666} \drawTriangle{15}{0.415666}{0}{blue}
    \drawMark{15}{0.446399} \drawTriangle{15}{0.446399}{0}{blue}
    \drawMark{15}{0.477132} \drawTriangle{15}{0.477132}{0}{blue}
    \drawMark{15}{0.507865} \drawTriangle{15}{0.507865}{60}{blue} \drawTriangle{15}{0.507865}{0}{blue}
    \drawMark{15}{0.538598} \drawTriangle{15}{0.538598}{0}{blue}
    \drawMark{15}{0.569331} \drawTriangle{15}{0.569331}{60}{blue} \drawTriangle{15}{0.569331}{0}{blue}
    \drawMark{15}{0.600064} \drawTriangle{15}{0.600064}{60}{blue} \drawTriangle{15}{0.600064}{0}{yellow}
    \drawMark{15}{0.630797} \drawTriangle{15}{0.630797}{60}{yellow} \drawTriangle{15}{0.630797}{0}{yellow}
    \drawMark{15}{0.66153} \drawTriangle{15}{0.66153}{60}{cyan} \drawTriangle{15}{0.66153}{0}{green}
    \drawMark{15}{0.692263} \drawTriangle{15}{0.692263}{0}{blue}
    \drawMark{15}{0.722996} \drawTriangle{15}{0.722996}{0}{blue}
    \drawMark{16}{0.415666} \drawTriangle{16}{0.415666}{60}{cyan}
    \drawMark{16}{0.446399} \drawTriangle{16}{0.446399}{60}{blue}
    \drawMark{16}{0.477132} \drawTriangle{16}{0.477132}{60}{cyan}
    \drawMark{16}{0.507865} \drawTriangle{16}{0.507865}{60}{blue} \drawTriangle{16}{0.507865}{0}{blue}
    \drawMark{16}{0.538598} \drawTriangle{16}{0.538598}{60}{cyan}
    \drawMark{16}{0.569331} \drawTriangle{16}{0.569331}{60}{yellow} \drawTriangle{16}{0.569331}{0}{blue}
    \drawMark{16}{0.600064} \drawTriangle{16}{0.600064}{60}{red} \drawTriangle{16}{0.600064}{0}{cyan}
    \drawMark{16}{0.630797} \drawTriangle{16}{0.630797}{60}{red} \drawTriangle{16}{0.630797}{0}{cyan}
    \drawMark{16}{0.66153} \drawTriangle{16}{0.66153}{60}{orange}
    \drawMark{16}{0.692263} \drawTriangle{16}{0.692263}{60}{blue}
    \drawMark{16}{0.722996} \drawTriangle{16}{0.722996}{60}{cyan}
    \end{axis}
  \end{tikzpicture}

\bigskip

The dashed line for \texttt{harp2} shows that adding a single cut does not necessarily help in branch-and-bound, which may be due to side-effects such as different branching decisions.
For some instances, such as \texttt{mod008}, the cuts' face dimensions are well distributed.
In contrast to this, some instances exhibit only very few distinct dimension values, e.g., only non-supporting cuts for \texttt{qiu}.
Interestingly, the 6246 cuts for \texttt{seymour} induce only empty faces as well as faces with dimensions between $1218$ and $1254$.
This can partially be explained via the cut classes.
On the one hand, all generated strengthened Chv{\'a}tal-Gomory cuts are non-supporting.
On the other hand, some of the c-MIR cuts and $\{0,1/2\}$-cuts are non-supporting while others induce faces of very high dimension.

\bigskip

\message{Instance <qiu>.}
  \begin{tikzpicture}
    \begin{axis}[
      enlargelimits = 0.05,
      xlabel = {\textbf{qiu}: \textcolor{red}{$L = 10$} cuts},
      width = 75mm, 
      height = 60mm, 
      xtick={0,4,8,12,16},
      xticklabels={0,$177$,$354$,$531$,$708$},
      xmin = -1,
      xmax = 17,
      ymin = 0.516131,
      ymax = 0.815454,
      grid = major,
      colormap = {blueorangered}{rgb255(0cm)=(0,0,255); rgb255(1cm)=(255,128,0); rgb255(2cm)=(255,0,0)},
      colormap name = blueorangered,
    ]
    \draw[black!50!white,thick,dashed] (axis cs:-2,0.623032) -- (axis cs:18,0.623032);
    \drawMark{-1}{0.516131} \drawTriangle{-1}{0.516131}{0}{blue}
    \drawMark{-1}{0.537511} \drawTriangle{-1}{0.537511}{0}{green}
    \drawMark{-1}{0.558891} \drawTriangle{-1}{0.558891}{0}{yellow}
    \drawMark{-1}{0.580272} \drawTriangle{-1}{0.580272}{0}{red}
    \drawMark{-1}{0.601652} \drawTriangle{-1}{0.601652}{0}{yellow}
    \drawMark{-1}{0.623032} \drawTriangle{-1}{0.623032}{0}{orange}
    \drawMark{-1}{0.644412} \drawTriangle{-1}{0.644412}{0}{yellow}
    \drawMark{-1}{0.665793} \drawTriangle{-1}{0.665793}{0}{yellow}
    \drawMark{-1}{0.687173} \drawTriangle{-1}{0.687173}{0}{blue}
    \drawMark{-1}{0.708553} \drawTriangle{-1}{0.708553}{0}{blue}
    \drawMark{-1}{0.815454} \drawTriangle{-1}{0.815454}{0}{blue}
    \end{axis}
  \end{tikzpicture}
\hfill
\message{Instance <seymour>.}
  \begin{tikzpicture}
    \begin{axis}[
      enlargelimits = 0.05,
      xlabel = {\textbf{seymour}: \textcolor{red}{$L = 2703$} cuts},
      width = 75mm, 
      height = 60mm, 
      xtick={0,4,8,12,16},
      xticklabels={0,$313$,$627$,$940$,$1254$},
      ytick={0.06, 0.08, 0.10, 0.12},
      yticklabels={0.06, 0.08, 0.10, 0.12},
      xmin = -1,
      xmax = 17,
      ymin = 0.055919,
      ymax = 0.127267,
      grid = major,
      colormap = {blueorangered}{rgb255(0cm)=(0,0,255); rgb255(1cm)=(255,128,0); rgb255(2cm)=(255,0,0)},
      colormap name = blueorangered,
    ]
    \draw[black!50!white,thick,dashed] (axis cs:-2,0.0712079) -- (axis cs:18,0.0712079);
    \drawMark{-1}{0.055919} \drawTriangle{-1}{0.055919}{-120}{blue} \drawTriangle{-1}{0.055919}{0}{blue}
    \drawMark{-1}{0.0610153} \drawTriangle{-1}{0.0610153}{-120}{blue} \drawTriangle{-1}{0.0610153}{0}{blue}
    \drawMark{-1}{0.0661116} \drawTriangle{-1}{0.0661116}{-120}{red} \drawTriangle{-1}{0.0661116}{0}{cyan}
    \drawMark{-1}{0.0712079} \drawTriangle{-1}{0.0712079}{-120}{orange} \drawTriangle{-1}{0.0712079}{0}{blue}
    \drawMark{-1}{0.0763041} \drawTriangle{-1}{0.0763041}{-120}{cyan} \drawTriangle{-1}{0.0763041}{0}{blue}
    \drawMark{-1}{0.0814004} \drawTriangle{-1}{0.0814004}{-120}{blue} \drawTriangle{-1}{0.0814004}{0}{blue}
    \drawMark{-1}{0.0864967} \drawTriangle{-1}{0.0864967}{-120}{blue} \drawTriangle{-1}{0.0864967}{0}{blue}
    \drawMark{-1}{0.091593} \drawTriangle{-1}{0.091593}{-120}{blue} \drawTriangle{-1}{0.091593}{0}{blue}
    \drawMark{-1}{0.0966893} \drawTriangle{-1}{0.0966893}{-120}{blue}
    \drawMark{-1}{0.101786} \drawTriangle{-1}{0.101786}{-120}{blue}
    \drawMark{15}{0.055919} \drawTriangle{15}{0.055919}{0}{blue}
    \drawMark{15}{0.0661116} \drawTriangle{15}{0.0661116}{-120}{blue} \drawTriangle{15}{0.0661116}{0}{blue} \drawTriangle{15}{0.0661116}{-60}{blue}
    \drawMark{15}{0.0712079} \drawTriangle{15}{0.0712079}{-60}{blue}
    \drawMark{15}{0.0763041} \drawTriangle{15}{0.0763041}{-60}{blue}
    \drawMark{15}{0.0814004} \drawTriangle{15}{0.0814004}{0}{blue} \drawTriangle{15}{0.0814004}{-60}{blue}
    \drawMark{15}{0.0864967} \drawTriangle{15}{0.0864967}{-60}{blue}
    \drawMark{15}{0.091593} \drawTriangle{15}{0.091593}{0}{blue}
    \drawMark{15}{0.111978} \drawTriangle{15}{0.111978}{-60}{blue}
    \drawMark{16}{0.0610153} \drawTriangle{16}{0.0610153}{-60}{blue}
    \drawMark{16}{0.0661116} \drawTriangle{16}{0.0661116}{-60}{blue}
    \drawMark{16}{0.0712079} \drawTriangle{16}{0.0712079}{0}{blue} \drawTriangle{16}{0.0712079}{-60}{blue}
    \drawMark{16}{0.0763041} \drawTriangle{16}{0.0763041}{-120}{blue} \drawTriangle{16}{0.0763041}{-60}{blue}
    \drawMark{16}{0.0814004} \drawTriangle{16}{0.0814004}{0}{blue} \drawTriangle{16}{0.0814004}{-60}{blue}
    \drawMark{16}{0.0864967} \drawTriangle{16}{0.0864967}{0}{blue} \drawTriangle{16}{0.0864967}{-60}{blue}
    \drawMark{16}{0.091593} \drawTriangle{16}{0.091593}{0}{blue} \drawTriangle{16}{0.091593}{-60}{blue}
    \drawMark{16}{0.0966893} \drawTriangle{16}{0.0966893}{-60}{blue}
    \drawMark{16}{0.101786} \drawTriangle{16}{0.101786}{-120}{blue}
    \drawMark{16}{0.127267} \drawTriangle{16}{0.127267}{0}{blue}
    \end{axis}
  \end{tikzpicture}

\bigskip

In general we don't see an indication that cuts from certain classes induce higher dimensional faces than others.
At first glance, such a pattern is apparent for \texttt{misc07} (at dimensions 130--160), however one has to keep in mind that these blue segments constitute a minority of the cuts.
In line with that, the majority of the cuts for \texttt{fiber} is concentrated around dimension 900 with a closed gap similar to that without cuts.

\bigskip

\message{Instance <misc07>.}
\hspace{-2mm}
  \begin{tikzpicture}
    \begin{axis}[
      enlargelimits = 0.05,
      xlabel = {\textbf{misc07}: \textcolor{red}{$L = 114$} cuts},
      width = 75mm, 
      height = 60mm, 
      xtick={0,4,8,12,16},
      xticklabels={0,$50$,$101$,$152$,$203$},
      xmin = -1,
      xmax = 17,
      ymin = 0.551971,
      ymax = 1,
      grid = major,
      colormap = {blueorangered}{rgb255(0cm)=(0,0,255); rgb255(1cm)=(255,128,0); rgb255(2cm)=(255,0,0)},
      colormap name = blueorangered,
    ]
    \draw[black!50!white,thick,dashed] (axis cs:-2,0.587173) -- (axis cs:18,0.587173);
    \drawMark{-1}{0.551971} \drawTriangle{-1}{0.551971}{-120}{blue} \drawTriangle{-1}{0.551971}{0}{blue}
    \drawMark{-1}{0.583973} \drawTriangle{-1}{0.583973}{-120}{cyan} \drawTriangle{-1}{0.583973}{0}{blue}
    \drawMark{-1}{0.615975} \drawTriangle{-1}{0.615975}{-120}{red} \drawTriangle{-1}{0.615975}{0}{yellow} \drawTriangle{-1}{0.615975}{-60}{blue}
    \drawMark{-1}{0.647977} \drawTriangle{-1}{0.647977}{-120}{red} \drawTriangle{-1}{0.647977}{0}{yellow} \drawTriangle{-1}{0.647977}{-60}{blue}
    \drawMark{-1}{0.679979} \drawTriangle{-1}{0.679979}{-120}{yellow} \drawTriangle{-1}{0.679979}{0}{green} \drawTriangle{-1}{0.679979}{-60}{blue}
    \drawMark{-1}{0.711981} \drawTriangle{-1}{0.711981}{-120}{green} \drawTriangle{-1}{0.711981}{0}{green}
    \drawMark{-1}{0.743983} \drawTriangle{-1}{0.743983}{-120}{cyan} \drawTriangle{-1}{0.743983}{0}{cyan}
    \drawMark{-1}{0.775985} \drawTriangle{-1}{0.775985}{-120}{cyan} \drawTriangle{-1}{0.775985}{0}{blue}
    \drawMark{-1}{0.807988} \drawTriangle{-1}{0.807988}{-120}{blue} \drawTriangle{-1}{0.807988}{0}{blue}
    \drawMark{-1}{0.83999} \drawTriangle{-1}{0.83999}{-120}{blue} \drawTriangle{-1}{0.83999}{0}{blue}
    \drawMark{-1}{0.903994} \drawTriangle{-1}{0.903994}{-120}{blue}
    \drawMark{-1}{1} \drawTriangle{-1}{1}{0}{blue}
    \drawMark{5}{0.615975} \drawTriangle{5}{0.615975}{0}{blue}
    \drawMark{6}{0.583973} \drawTriangle{6}{0.583973}{-60}{blue}
    \drawMark{6}{0.679979} \drawTriangle{6}{0.679979}{-60}{blue}
    \drawMark{8}{0.615975} \drawTriangle{8}{0.615975}{0}{blue}
    \drawMark{8}{0.647977} \drawTriangle{8}{0.647977}{0}{blue}
    \drawMark{11}{0.583973} \drawTriangle{11}{0.583973}{-120}{blue} \drawTriangle{11}{0.583973}{0}{blue}
    \drawMark{11}{0.615975} \drawTriangle{11}{0.615975}{-120}{blue} \drawTriangle{11}{0.615975}{0}{blue}
    \drawMark{11}{0.647977} \drawTriangle{11}{0.647977}{-120}{blue} \drawTriangle{11}{0.647977}{0}{blue}
    \drawMark{11}{0.679979} \drawTriangle{11}{0.679979}{0}{blue}
    \drawMark{11}{0.711981} \drawTriangle{11}{0.711981}{-120}{blue} \drawTriangle{11}{0.711981}{0}{blue}
    \drawMark{11}{0.743983} \drawTriangle{11}{0.743983}{0}{blue}
    \drawMark{11}{0.775985} \drawTriangle{11}{0.775985}{-120}{blue} \drawTriangle{11}{0.775985}{0}{blue} \drawTriangle{11}{0.775985}{-60}{blue}
    \drawMark{12}{0.615975} \drawTriangle{12}{0.615975}{-120}{blue} \drawTriangle{12}{0.615975}{-60}{blue}
    \drawMark{12}{0.647977} \drawTriangle{12}{0.647977}{-120}{blue}
    \drawMark{12}{0.679979} \drawTriangle{12}{0.679979}{-120}{blue}
    \drawMark{12}{0.711981} \drawTriangle{12}{0.711981}{-120}{blue}
    \drawMark{12}{0.743983} \drawTriangle{12}{0.743983}{-120}{blue}
    \drawMark{12}{0.775985} \drawTriangle{12}{0.775985}{-120}{blue}
    \drawMark{13}{0.615975} \drawTriangle{13}{0.615975}{0}{blue}
    \drawMark{14}{0.615975} \drawTriangle{14}{0.615975}{-120}{blue} \drawTriangle{14}{0.615975}{0}{blue}
    \drawMark{14}{0.647977} \drawTriangle{14}{0.647977}{0}{blue} \drawTriangle{14}{0.647977}{-60}{blue}
    \drawMark{14}{0.679979} \drawTriangle{14}{0.679979}{-120}{blue} \drawTriangle{14}{0.679979}{0}{blue}
    \drawMark{14}{0.711981} \drawTriangle{14}{0.711981}{0}{blue}
    \drawMark{14}{0.743983} \drawTriangle{14}{0.743983}{0}{blue}
    \drawMark{14}{0.775985} \drawTriangle{14}{0.775985}{0}{blue}
    \drawMark{15}{0.583973} \drawTriangle{15}{0.583973}{-120}{blue} \drawTriangle{15}{0.583973}{-60}{blue}
    \drawMark{15}{0.615975} \drawTriangle{15}{0.615975}{-120}{blue} \drawTriangle{15}{0.615975}{0}{blue} \drawTriangle{15}{0.615975}{-60}{blue}
    \drawMark{15}{0.647977} \drawTriangle{15}{0.647977}{-60}{blue}
    \drawMark{15}{0.679979} \drawTriangle{15}{0.679979}{0}{blue} \drawTriangle{15}{0.679979}{-60}{blue}
    \drawMark{15}{0.711981} \drawTriangle{15}{0.711981}{0}{blue} \drawTriangle{15}{0.711981}{-60}{blue}
    \drawMark{15}{0.775985} \drawTriangle{15}{0.775985}{-60}{blue}
    \drawMark{15}{0.807988} \drawTriangle{15}{0.807988}{-60}{blue}
    \end{axis}
  \end{tikzpicture}
\hfill
\message{Instance <fiber>.}
  \begin{tikzpicture}
    \begin{axis}[
      enlargelimits = 0.05,
      xlabel = {\textbf{fiber}: \textcolor{red}{$L = 39$} cuts},
      width = 75mm, 
      height = 60mm, 
      xtick={0,4,8,12,16},
      xticklabels={0,$236$,$472$,$708$,$945$},
      xmin = -1,
      xmax = 17,
      ymin = 0.649644,
      ymax = 0.921407,
      grid = major,
      colormap = {blueorangered}{rgb255(0cm)=(0,0,255); rgb255(1cm)=(255,128,0); rgb255(2cm)=(255,0,0)},
      colormap name = blueorangered,
    ]
    \draw[black!50!white,thick,dashed] (axis cs:-2,0.766114) -- (axis cs:18,0.766114);
    \drawMark{-1}{0.669056} \drawTriangle{-1}{0.669056}{-240}{blue}
    \drawMark{-1}{0.707879} \drawTriangle{-1}{0.707879}{-240}{blue} \drawTriangle{-1}{0.707879}{0}{blue}
    \drawMark{-1}{0.727291} \drawTriangle{-1}{0.727291}{-240}{cyan} \drawTriangle{-1}{0.727291}{0}{blue}
    \drawMark{-1}{0.746702} \drawTriangle{-1}{0.746702}{-180}{blue} \drawTriangle{-1}{0.746702}{-240}{cyan} \drawTriangle{-1}{0.746702}{0}{blue}
    \drawMark{-1}{0.766114} \drawTriangle{-1}{0.766114}{-180}{blue} \drawTriangle{-1}{0.766114}{-240}{cyan} \drawTriangle{-1}{0.766114}{0}{blue}
    \drawMark{-1}{0.785525} \drawTriangle{-1}{0.785525}{-240}{blue} \drawTriangle{-1}{0.785525}{0}{blue}
    \drawMark{-1}{0.824349} \drawTriangle{-1}{0.824349}{0}{blue}
    \drawMark{-1}{0.84376} \drawTriangle{-1}{0.84376}{0}{blue}
    \drawMark{11}{0.766114} \drawTriangle{11}{0.766114}{0}{blue}
    \drawMark{12}{0.649644} \drawTriangle{12}{0.649644}{0}{blue}
    \drawMark{12}{0.766114} \drawTriangle{12}{0.766114}{0}{blue}
    \drawMark{13}{0.707879} \drawTriangle{13}{0.707879}{-240}{blue} \drawTriangle{13}{0.707879}{0}{blue}
    \drawMark{13}{0.727291} \drawTriangle{13}{0.727291}{-240}{cyan}
    \drawMark{13}{0.746702} \drawTriangle{13}{0.746702}{-240}{cyan} \drawTriangle{13}{0.746702}{0}{blue}
    \drawMark{13}{0.766114} \drawTriangle{13}{0.766114}{-240}{blue} \drawTriangle{13}{0.766114}{0}{blue}
    \drawMark{13}{0.785525} \drawTriangle{13}{0.785525}{-240}{blue} \drawTriangle{13}{0.785525}{0}{cyan}
    \drawMark{13}{0.804937} \drawTriangle{13}{0.804937}{0}{blue}
    \drawMark{13}{0.824349} \drawTriangle{13}{0.824349}{0}{blue}
    \drawMark{14}{0.707879} \drawTriangle{14}{0.707879}{-240}{blue}
    \drawMark{14}{0.727291} \drawTriangle{14}{0.727291}{-240}{blue} \drawTriangle{14}{0.727291}{0}{cyan}
    \drawMark{14}{0.746702} \drawTriangle{14}{0.746702}{-180}{blue} \drawTriangle{14}{0.746702}{-240}{red} \drawTriangle{14}{0.746702}{0}{cyan}
    \drawMark{14}{0.766114} \drawTriangle{14}{0.766114}{-240}{green} \drawTriangle{14}{0.766114}{0}{blue}
    \drawMark{14}{0.804937} \drawTriangle{14}{0.804937}{-240}{blue} \drawTriangle{14}{0.804937}{0}{blue}
    \drawMark{15}{0.669056} \drawTriangle{15}{0.669056}{0}{blue}
    \drawMark{15}{0.688467} \drawTriangle{15}{0.688467}{0}{blue}
    \drawMark{15}{0.707879} \drawTriangle{15}{0.707879}{-180}{blue} \drawTriangle{15}{0.707879}{60}{blue} \drawTriangle{15}{0.707879}{-240}{blue} \drawTriangle{15}{0.707879}{0}{cyan}
    \drawMark{15}{0.727291} \drawTriangle{15}{0.727291}{-120}{blue} \drawTriangle{15}{0.727291}{-180}{blue} \drawTriangle{15}{0.727291}{60}{cyan} \drawTriangle{15}{0.727291}{-240}{green} \drawTriangle{15}{0.727291}{0}{cyan} \drawTriangle{15}{0.727291}{-60}{blue}
    \drawMark{15}{0.746702} \drawTriangle{15}{0.746702}{-120}{blue} \drawTriangle{15}{0.746702}{-180}{blue} \drawTriangle{15}{0.746702}{60}{cyan} \drawTriangle{15}{0.746702}{-240}{yellow} \drawTriangle{15}{0.746702}{0}{yellow} \drawTriangle{15}{0.746702}{-60}{blue}
    \drawMark{15}{0.766114} \drawTriangle{15}{0.766114}{-120}{blue} \drawTriangle{15}{0.766114}{60}{cyan} \drawTriangle{15}{0.766114}{-240}{orange} \drawTriangle{15}{0.766114}{0}{yellow} \drawTriangle{15}{0.766114}{-60}{blue}
    \drawMark{15}{0.785525} \drawTriangle{15}{0.785525}{60}{blue} \drawTriangle{15}{0.785525}{-240}{blue} \drawTriangle{15}{0.785525}{0}{cyan}
    \drawMark{15}{0.804937} \drawTriangle{15}{0.804937}{-240}{blue} \drawTriangle{15}{0.804937}{0}{blue}
    \drawMark{15}{0.824349} \drawTriangle{15}{0.824349}{-240}{blue}
    \drawMark{15}{0.84376} \drawTriangle{15}{0.84376}{60}{blue} \drawTriangle{15}{0.84376}{-240}{blue}
    \drawMark{15}{0.863172} \drawTriangle{15}{0.863172}{-240}{blue} \drawTriangle{15}{0.863172}{0}{blue}
    \drawMark{15}{0.882584} \drawTriangle{15}{0.882584}{60}{blue} \drawTriangle{15}{0.882584}{-240}{blue} \drawTriangle{15}{0.882584}{0}{blue}
    \drawMark{16}{0.707879} \drawTriangle{16}{0.707879}{60}{blue}
    \drawMark{16}{0.727291} \drawTriangle{16}{0.727291}{60}{yellow} \drawTriangle{16}{0.727291}{-240}{blue}
    \drawMark{16}{0.746702} \drawTriangle{16}{0.746702}{60}{yellow} \drawTriangle{16}{0.746702}{-240}{blue}
    \drawMark{16}{0.766114} \drawTriangle{16}{0.766114}{60}{orange} \drawTriangle{16}{0.766114}{-240}{blue}
    \drawMark{16}{0.785525} \drawTriangle{16}{0.785525}{60}{cyan} \drawTriangle{16}{0.785525}{-240}{blue}
    \drawMark{16}{0.804937} \drawTriangle{16}{0.804937}{0}{blue}
    \drawMark{16}{0.84376} \drawTriangle{16}{0.84376}{60}{cyan} \drawTriangle{16}{0.84376}{0}{blue}
    \drawMark{16}{0.863172} \drawTriangle{16}{0.863172}{-180}{blue} \drawTriangle{16}{0.863172}{60}{blue} \drawTriangle{16}{0.863172}{-240}{blue}
    \drawMark{16}{0.882584} \drawTriangle{16}{0.882584}{60}{blue}
    \drawMark{16}{0.921407} \drawTriangle{16}{0.921407}{60}{blue}
    \end{axis}
  \end{tikzpicture}

\bigskip

Despite the heterogeneity of the results, one observation is common to many instances: the distribution of the face dimensions is biased towards $-1$ and high-dimensions, i.e., not many cuts inducing low dimensional faces are generated.

A corresponding histogram is depicted in Fig.~\ref{fig_dimension_distribution}.
We conjecture that the high dimensions occur because lifting and strengthening techniques for cutting planes are quite evolved.

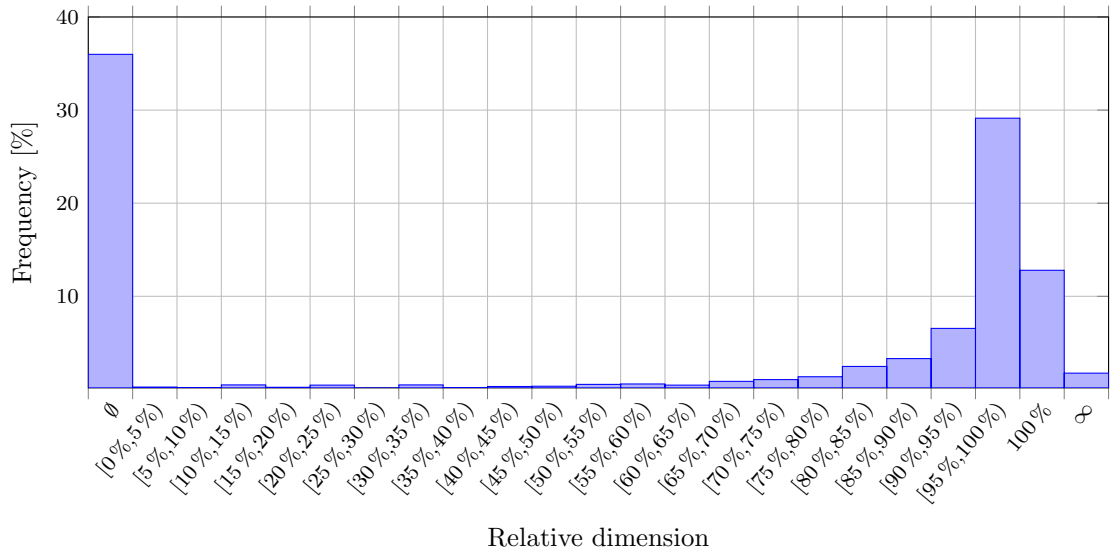
\begin{figure}[htb]
  \begin{tikzpicture}
    \begin{axis}[
      enlargelimits = false,
      width = 150mm,
      height = 65mm,
      ylabel = {Frequency [\%]},
      xlabel = {Relative dimension},
      xlabel style = {yshift=0pt},
      ylabel style = {yshift=0pt},
      ymax = 40,
      ytick = {0,10,20,30,40},
      xticklabels = {$\emptyset$,{[0\,\%,5\,\%)},{[5\,\%,10\,\%)},{[10\,\%,15\,\%)},{[15\,\%,20\,\%)},{[20\,\%,25\,\%)},{[25\,\%,30\,\%)},{[30\,\%,35\,\%)},{[35\,\%,40\,\%)},{[40\,\%,45\,\%)},{[45\,\%,50\,\%)},{[50\,\%,55\,\%)},{[55\,\%,60\,\%)},{[60\,\%,65\,\%)},{[65\,\%,70\,\%)},{[70\,\%,75\,\%)},{[75\,\%,80\,\%)},{[80\,\%,85\,\%)},{[85\,\%,90\,\%)},{[90\,\%,95\,\%)},{[95\,\%,100\,\%)},100\,\%,$\infty$},
      y tick label style={font=\footnotesize},
      x tick label style={font=\footnotesize, rotate=50, anchor=east, xshift=1mm, yshift=-1mm},
      grid = major,
      ybar interval,
    ]
    \addplot coordinates {
 (-1,35.9886) (0,0.253515) (1,0.172633) (2,0.487907) (3,0.231148) (4,0.463368) (5,0.166433) (6,0.487333) (7,0.192417) (8,0.315857) (9,0.344458) (10,0.543224) (11,0.587309) (12,0.462365) (13,0.867096) (14,1.05644) (15,1.36354) (16,2.47039) (17,3.31201) (18,6.55595) (19,29.131) (20,12.7972) (21,1.74989) (22,1.74989) };
    \end{axis}
  \end{tikzpicture}
  \caption{%
    Distribution of relative dimensions over the 37 instances from Table~\ref{table_instances} except for \texttt{p2756} (avoiding a bias due to many failures dimension computations).
    A cut of dimension $k$ in an instance having $\ell$ cuts and with $\dim(P) = d$ contributes $1/(36 \ell)$ to its bar.
    For $k = -1$ (resp.\ $k=d$), this is $\emptyset$ (resp.\ $\infty$), and it is $k/(d-1)$ otherwise.
  }
  \label{fig_dimension_distribution}
\end{figure}

The first bar in Fig.~\ref{fig_dimension_distribution} indicates that for an instance chosen uniformly at random among the ones we considered and then a randomly chosen cut for this instance, this cut is non-supporting with probability greater than 35\,\%.
This is remarkably high and thus we conclude this paper by proposing to investigate means to (heuristically) test for such a situation, with the goal of strengthening a non-supporting cutting plane by a reduction its right-hand side.

\paragraph{Acknowledgments.}
We thank R.\ Hoeksma and M.\ Uetz as well as the \SCIP{} development team, in particular A.~Gleixner, C.~Hojny and M.~Pfetsch for valuable suggestions on the computational experiments and their presentation.

\newpage

\bibliographystyle{plain}
\bibliography{cut-dimensions}

\begin{thebibliography}{10}

\bibitem{AchterbergR10}
Tobias Achterberg and Christian Raack.
\newblock {The MCF-separator: detecting and exploiting multi-commodity flow
  structures in MIPs}.
\newblock {\em Mathematical Programming Computation}, 2(2):125--165, 2010.

\bibitem{CapraraF96}
Alberto Caprara and Matteo Fischetti.
\newblock {$\{$0, 1/2$\}$-Chv{\'a}tal-Gomory cuts}.
\newblock {\em Mathematical Programming}, 74(3):221--235, 1996.

\bibitem{SCIP7}
Gerald Gamrath, Daniel Anderson, Ksenia Bestuzheva, Wei-Kun Chen, Leon Eifler,
  Maxime Gasse, Patrick Gemander, Ambros Gleixner, Leona Gottwald, Katrin
  Halbig, Gregor Hendel, Christopher Hojny, Thorsten Koch, Pierre Le~Bodic,
  Stephen~J. Maher, Frederic Matter, Matthias Miltenberger, Erik M{\"u}hmer,
  Benjamin M{\"u}ller, Marc~E. Pfetsch, Franziska Schl{\"o}sser, Felipe
  Serrano, Yuji Shinano, Christine Tawfik, Stefan Vigerske, Fabian Wegscheider,
  Dieter Weninger, and Jakob Witzig.
\newblock {The SCIP Optimization Suite 7.0}.
\newblock Technical report, Optimization Online, March 2020.

\bibitem{GuNS99}
Zonghao Gu, George~L. Nemhauser, and Martin W.~P. Savelsbergh.
\newblock Lifted flow cover inequalities for mixed 0-1 integer programs.
\newblock {\em Mathematical Programming}, 85(3):439--467, 1999.

\bibitem{KosterZK09}
Arie~M.C.A. Koster, Adrian Zymolka, and Manuel Kutschka.
\newblock {Algorithms to Separate $\{0,\frac{1}{2}\}$-Chv{\'a}tal-Gomory Cuts}.
\newblock {\em Algorithmica}, 55(2):375--391, 2009.

\bibitem{LetchfordL02b}
Adam~N. Letchford and Andrea Lodi.
\newblock {Strengthening Chv{\'a}tal–Gomory cuts and Gomory fractional cuts}.
\newblock {\em Operations Research Letters}, 30(2):74--82, 2002.

\bibitem{MarchandW01}
Hugues Marchand and Laurence~A. Wolsey.
\newblock {Aggregation and Mixed Integer Rounding to Solve MIPs}.
\newblock {\em Operations Research}, 49(3):363--371, 2001.

\bibitem{Martin99}
Alexander Martin.
\newblock Integer programs with block structure, 1999.

\bibitem{MIPLIB3}
Cassandra~M. Mczeal, Martin W.~P. Savelsbergh, and Robert~E. Bixby.
\newblock An updated mixed integer programming library: {MIPLIB} 3.0.
\newblock {\em Optima}, 58:12--15, 1998.

\bibitem{Schrijver86}
Alexander Schrijver.
\newblock {\em {Theory of Linear and Integer Programming}}.
\newblock John Wiley \& Sons, Inc., New York, NY, USA, 1986.

\bibitem{Walter16}
Matthias Walter.
\newblock {\em {Investigating Polyhedra by Oracles and Analyzing Simple
  Extensions of Polytopes}}.
\newblock PhD thesis, Otto-von-Guericke-Universität Magdeburg, 2016.

\bibitem{IPO}
Matthias Walter.
\newblock {IPO -- Investigating Polyhedra by Oracles}, 2016.
\newblock Software available at:
  \href{https://bitbucket.org/matthias-walter/ipo/}{bitbucket.org/matthias-walter/ipo/}.

\bibitem{Weismantel97}
Robert Weismantel.
\newblock On the 0/1 knapsack polytope.
\newblock {\em Mathematical Programming}, 77(3):49--68, 1997.

\bibitem{Wolter06}
Kati Wolter.
\newblock {Implementation of Cutting Plane Separators for Mixed Integer
  Programs}.
\newblock Master's thesis, Technische Universit{\"a}t Berlin, 2006.

\end{thebibliography}

\newpage

\appendix

\section{Additional plots}
\label{appendix_plots}

Here we provide additional instance-specific plots.
This underlines the conclusions drawn in Section~\ref{sec_computations} and allows inspection of
results for instances with certain characteristics (see Table~\ref{table_instances}).

\input{appendix_plots.tex}

\end{document}